\documentclass[12pt]{amsart}

\usepackage{amsmath,amssymb,amsthm,a4,bm,epsfig,color,graphics}

\newtheorem{theorem}{Theorem}
\newtheorem{proposition}{Proposition}
\newtheorem{lemma}{Lemma}
\newtheorem{corollary}{Corollary}

\newtheorem*{remark*}{Remark}
\newtheorem*{remarks*}{Remarks}

\newcommand{\bbR}{{\mathbb R}}
\newcommand{\bbC}{{\mathbb C}}

\newcommand{\Hone}{H^{1}({\mathbb R}^N,{\mathbb C})}

\newcommand{\cL}{{\mathcal L}}
\newcommand{\cM}{{\mathcal M}}
\newcommand{\cN}{{\mathcal N}}
\newcommand{\cE}{{\mathcal E}}
\newcommand{\cT}{{\mathcal T}}

\newcommand{\cC}{{\mathcal C}}

\renewcommand{\Re}{{\mathrm{Re}}}
\renewcommand{\Im}{{\mathrm{Im}}}

\begin{document}

\title[]{Effective dynamics of solitons in the presence of rough nonlinear perturbations}

\author[]{W. K. Abou Salem}

\address{Department of Mathematics, University of Toronto, Toronto, Ontario, Canada M5S 2E4 \\ E-mail: walid@math.utoronto.ca} 
\maketitle

\begin{abstract}

The effective long-time dynamics of solitary wave solutions of the nonlinear Schr\"odinger equation in the presence of rough nonlinear perturbations is rigorously studied. It is shown that, if the initial state is close to a slowly travelling soliton of the unperturbed NLS equation (in $H^1$ norm), then, over a long time scale, the true solution of the initial value problem will be close to a soliton whose center of mass dynamics is approximately determined by an effective potential that corresponds to the restriction of the nonlinear perturbation to the soliton manifold. 

\end{abstract}


\section{Introduction}\label{sec:Introduction}

In this paper, we rigorously study the long time dynamics of solitary wave solutions of the nonlinear Schr\"odinger equation in the presence of {\it spatially rough} {\it nonlinear} {\it perturbations}. Physically, nonlinear perturbations of the NLS equation arise in inhomogeneous nonlinear optical media, and in Bose-Einstein experiments where the scattering length, which determines the nonlinear coefficient in the Gross-Pitaevskii equation, is modified using Feshbach resonances, such as in experiments involving atomic soliton lasers, see \cite{GA1} and references therein. 

Consider the initial value problem,
\begin{equation}
\label{eq:NLSE}
i\partial_t \psi = -\Delta \psi - |\psi|^{2s}\psi + \epsilon f(x,\psi),
\end{equation}
\begin{equation*}
\psi(t=0) =\varphi \in H^1(\bbR^N, \bbC), 
\end{equation*}
where $t\in \bbR$ denotes time, $x\in \bbR^N$ denotes a point in configuration space, with $N\ge 1,$ $\partial_t = \frac{\partial}{\partial t}$ the time derivative, $\Delta= \sum_{i=1}^N \frac{\partial^2}{\partial x_i^2}$ the $N$-dimensional Laplacian, $s\in (0,\frac{2}{N}),$ $\epsilon\in [0,1)$ and $f$ corresponds to a local nonlinear perturbation, which is a mapping on (complex) Sobolev spaces
\begin{equation*}
f: H^1(\bbR^N,\bbC)\rightarrow H^{-1}(\bbR^N,\bbC),
\end{equation*}
that is given by
\begin{equation}
\label{eq:Perturbation1}
f(x,\psi) = \lambda (x) |\psi|^{2\tilde{s}}\psi,
\end{equation}
where $\tilde{s}\in [0,\frac{2}{N-2})$ ($\tilde{s}\in [0,\infty)$ if $N=1,2,$), and $\lambda \in L^{\infty} (\bbR^N).$ \footnote{ Note that for a perturbation given by (\ref{eq:Perturbation1}), (\ref{eq:NLSE}) is globally well-posed in $H^1,$ see for example Theorem 6.1.1 in \cite{Ca1}; and also \cite{Su1}.} 

When $\epsilon=0,$ (\ref{eq:NLSE}) admits travelling soliton solutions, which are solutions of the form
\begin{equation*}
u(x,t) = e^{i\gamma + i \frac{1}{2} v\cdot (x-a)} \eta_\mu (x-a),
\end{equation*}
where $v\in \bbR^N$ is the velocity of the center of the soliton, $a=a_0 + v t \in \bbR^N$ denotes the center's position, $\gamma = \gamma_0 + \frac{v^2}{4}t \in \bbR$ is a phase, $\mu\in \bbR^+ ,$ and $\eta_\mu$ satisfies the nonlinear eigenvalue problem \footnote{See for example \cite{Ca1}, Chapter 8, and references therein, for an overview.}
\begin{equation*}
(-\Delta + \mu)\eta_\mu -\eta_\mu^3 =0.
\end{equation*}
 Starting with an initial condition which is close to a slowly travelling soliton, we want to study the effective dynamics of the true solution of the initial value problem if a rough nonlinear perturbation is switched on. We assume that $\exists \sigma_0 = (a_0,v_0,\gamma_0,\mu_0) \in \bbR^N\times \bbR^N \times [0,2\pi) \times \bbR^+$ and {\it fixed} $\alpha\in (0,1]$ such that
\begin{align}
\label{eq:InitialCondition1}
&\|\varphi - e^{i(\gamma_0 + \frac{1}{2}v_0(x-a_0))} \eta_{\mu_0} (x-a_0)\|_{H^1} < c  \ \  \epsilon^{\frac{1}{2}(1+\alpha)},\\
&\|v_0\| := (\sum_{i=1}^N v^2_{0,i})^{\frac{1}{2}} \le c \ \ \epsilon^\alpha, \nonumber
\end{align}
for some finite constant $c$ independent of $\epsilon$ and $\alpha.$ We have the following theorem, which is the main result of this paper.

\begin{theorem} 
\label{th:MainResult}
Consider the initial value problem (\ref{eq:NLSE}), and suppose that the nonlinear perturbation $f$ is given by (\ref{eq:Perturbation1}), 
and the initial condition $\varphi$ satisfies (\ref{eq:InitialCondition1}). 
Then there exists $\epsilon_0 >0,$ such that for all 
\begin{equation*}
\epsilon\in [0,\epsilon_0) , \ \ \beta\in (0,\alpha), \ \ \nu \in (0, \min(\beta,\alpha-\beta)),
\end{equation*} 
there exists an absolute positive constant $C$, independent of $\epsilon,\beta,$ and $\nu,$ yet dependent on the initial condition, such that,  for times $0\le t\le  C \nu |log \epsilon |/\epsilon^{\min (\beta-\nu, 1-\alpha)},$ the 
solution of the initial value problem (\ref{eq:NLSE}) is of the form 
\begin{equation*}
\psi(x,t) =e^{i(\frac{1}{2}v\cdot (x-a) + \gamma)}(\eta_\mu(x-a) + w(x-a,t)),
\end{equation*}
with
\begin{equation*}
\|w\|_{H^1}=O(\epsilon^{\frac{1}{2}(1+\tilde{\alpha})}),
\end{equation*}
where $\tilde{\alpha} = \alpha-\beta-\nu >0,$
and the parameters $a, v, \gamma$ and $\mu$ satisfy the 
differential equations
\begin{align*}
\partial_t a & = v + O(\epsilon^{1+\tilde{\alpha}}), \\
\partial_t v & = - 2 \sqrt{\mu} \ \ \nabla_a V_{eff,\mu}(a) + O(\epsilon^{1+\tilde{\alpha}}), \\ 
\partial_t \gamma & = \mu  + \frac{1}{4}v^2  -  \frac{2-Ns+2\tilde{s}}{2-Ns} V_{eff,\mu}(a) \\ & + \frac{s}{2-Ns} B_{eff,\mu}(a) + O(\epsilon^{1+\tilde{\alpha}}), \\ 
\partial_t{\mu} &= O(\epsilon^{1+\tilde{\alpha}}), 
\end{align*}
where 
\begin{equation*}
V_{eff, \mu}(a) = \frac{\epsilon\mu^{\frac{\tilde{s}}{s}}}{2+2\tilde{s}} \int ~ dx \lambda(\frac{x+a}{\sqrt{\mu}})\eta_1^{2\tilde{s}+2}(x),
\end{equation*}
and
\begin{equation*}
B_{eff,\mu}(a)= \frac{\epsilon\mu^{\frac{\tilde{s}}{s}}}{2+2\tilde{s}} \nabla_a \cdot \int ~ dx \lambda(\frac{x+a}{\sqrt{\mu}})\eta_1^{2\tilde{s}+2}(x) x .
\end{equation*}
\end{theorem}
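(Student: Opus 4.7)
The plan is to adapt the standard symplectic modulation method to the setting of a merely bounded (possibly discontinuous) nonlinear coefficient $\lambda$. In a tubular $H^1$-neighborhood of the soliton manifold I would decompose
\begin{equation*}
\psi(x,t) = e^{i(\gamma(t) + \tfrac{1}{2}v(t)\cdot(x-a(t)))}\bigl(\eta_{\mu(t)}(y) + w(y,t)\bigr), \qquad y = x-a(t),
\end{equation*}
and impose the symplectic orthogonality conditions on $w$ that annihilate the generalized null space of the linearized operator $L_\mu$ about $\eta_\mu$ (the modes generated by translations, Galilean boosts, phase rotation and dilation). The implicit function theorem then determines $\sigma(t)=(a,v,\gamma,\mu)(t)$ and $w(t)$ uniquely and smoothly. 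Inserting the ansatz into (\ref{eq:NLSE}), subtracting the soliton equation, and projecting on the tangent directions produces coupled modulation ODEs for $\dot\sigma$ together with an equation of the form $i\partial_t w = L_\mu w + R(\sigma,w)$, where $R$ lumps the higher-order nonlinearity, the perturbation $\epsilon f$, and the $\dot\sigma$-errors. Setting $w=0$ reduces these ODEs to the stated effective system; in particular, $\nabla_a V_{eff,\mu}(a)$ is given a precise meaning by changing variables $x\mapsto \sqrt\mu y - a$ and moving the derivative onto the smooth profile $\eta_1^{2\tilde s+2}$ under the integral, so that only $\|\lambda\|_{L^\infty}$ enters the estimates.

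To control $w$ in $H^1$ I would use the Lyapunov functional
\begin{equation*}
\Lambda(t) \;=\; E_\epsilon(\psi(t)) \;+\; \mu(t)\,N(\psi(t)) \;+\; \tfrac{1}{2}\,v(t)\cdot P(\psi(t)),
\end{equation*}
where $E_\epsilon(\psi)=\tfrac12\int|\nabla\psi|^2-\tfrac{1}{2s+2}\int|\psi|^{2s+2}+\tfrac{\epsilon}{2\tilde s+2}\int\lambda|\psi|^{2\tilde s+2}$ is the conserved energy of (\ref{eq:NLSE}), $N$ is the $L^2$ mass, and $P=\Im\int\bar\psi\nabla\psi$ is the momentum. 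Expanding $\Lambda(t)-\Lambda(u_{\sigma(t)})$ to second order in $w$, the linear term vanishes by the Euler--Lagrange property of $\eta_\mu$ together with the orthogonality of $w$, and the quadratic form is $H^1$-coercive on the symplectic orthogonal of the tangent space by the classical spectral theory of $L_\mu$. This gives $\|w\|_{H^1}^2\lesssim|\Lambda(t)-\Lambda(u_{\sigma(t)})|+O(\|w\|_{H^1}^3)$. Differentiating in time, the part carried by $\dot\sigma$ multiplies $O(\|w\|^2)$ (so is benign under a bootstrap), while the genuinely perturbative contribution, arising from the $\epsilon\lambda$ term in $E_\epsilon$ and from the controlled non-conservation of $P$, is bounded by $C\epsilon^{1+\tilde\alpha}$ using only $\|\lambda\|_{L^\infty}$ and the rapid decay and smoothness of $\eta_\mu$.

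I would then close the argument by a bootstrap: assume a priori $\|w(t)\|_{H^1}\le C_1\epsilon^{\frac{1}{2}(1+\tilde\alpha)}$ on $[0,T]$, integrate the resulting differential inequality, and check that the constant improves on the stated time window thanks to the margin $\tilde\alpha=\alpha-\beta-\nu>0$; the logarithm in the time scale is what the sharp Gr\"onwall step produces, and the exponent $\min(\beta-\nu,1-\alpha)$ records the balance between the perturbative growth rate and the small-velocity hypothesis. The parameter estimates for $(a,v,\gamma,\mu)$ then follow by integrating the modulation ODEs, the $O(\epsilon^{1+\tilde\alpha})$ discrepancy with the effective system remaining a small correction to the leading drifts $v=O(\epsilon^\alpha)$ and $\nabla V_{eff}=O(\epsilon)$ over the stated time. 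The main obstacle, I expect, is precisely the roughness of $\lambda$: no spatial derivative of $\lambda$ may ever appear, yet the effective drift is governed by $\nabla_a V_{eff,\mu}$, the second-order expansion of $\Lambda$ introduces terms involving $\lambda(x)$ pointwise, and the modulation remainders contain $\dot v$ and $\dot\mu$ that feed back into the perturbation. This forces every derivative to be pushed onto the soliton profile by change of variables before any estimate is taken, and it must be done consistently in the Taylor expansion of $\Lambda$, in the projection of $R$ on the tangent directions, and in the dual pairing $\langle f(u_\sigma+w),\cdot\rangle_{H^{-1},H^1}$ used to control $R$.
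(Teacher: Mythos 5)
Your proposal follows essentially the same route as the paper: a skew-orthogonal (symplectic) modulation decomposition with every derivative of the rough coefficient moved onto the soliton profile by a change of variables and integration by parts, the Lyapunov functional $E_\epsilon+\mu N+\tfrac12 v\cdot P$ (which is exactly the paper's $\mathcal{E}_\mu(u')$ plus the $\epsilon F$ correction), coercivity of $\mathcal{L}_\mu$ on the symplectically orthogonal complement, and a bootstrap whose accumulating constants produce the $|\log\epsilon|$ time scale with exponent $\min(\beta-\nu,1-\alpha)$. The only differences are presentational: the paper packages the modulation equations via the group structure of the soliton manifold and a Lyapunov--Schmidt map onto its Lie algebra rather than the classical $\eta_{\mu(t)}$ ansatz, and it keeps explicit the $O(\epsilon\|w\|_{H^1})$ linear term contributed by the perturbation (which does not actually vanish, but is harmless since $\epsilon^2\le\epsilon^{1+\tilde\alpha}$).
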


We note that $V_{eff,\mu}$ corresponds to the restriction of the nonlinear perturbation to the soliton manifold, see Sect. \ref{sec:ProofMain}. In other words, for initial conditions close enough to a slowly travelling 
soliton, and for small enough nonlinear perturbations, the center of mass motion of the soliton is determined by Hamilton's (or Newton's) equations
of motion for a point particle in an external potential corresponding to the restriction of the perturbation to the soliton manifold, up to small errors due to radiation damping and the extended nature of the soliton. The proof of this result relies on three basic ingredients: First, using a skew-orthogonal decomposition, Subsect. \ref{sec:SkewOrtho}, the true solution of the NLS equation is decomposed into a part that belongs to the soliton manifold plus a fluctuation. By exploiting the group strucuture of the soliton manifold, Subsect. \ref{sec:GroupStructure}, and using a Lyapunov-Schmidt mapping onto the tangent space of the soliton manifold, (\ref{eq:SOProj}) in Subsect. \ref{sec:RepEqMotion}, it is shown that the dynamics of the part belonging to the soliton manifold is approximately determined by the restriction of the NLS equation to the soliton manifold, Corollary \ref{cr:ReparametrizedEqMotion} in Subsect. \ref{sec:RepEqMotion}. As for the fluctuation, its $H^1$ norm is controlled over a long time scale using an approximate Lyapunov exponent, Proposition \ref{pr:FluctuationBd} in Subsect. \ref{sec:ControlFluctuation}.

We now mention earlier results on solitary wave dynamics that are relevant to our analysis. In the last few years, there has been substantial progress in understanding solitary wave dynamics of the NLS equation with time-independent potentials in the semi-classical limit, \cite{BJ1,FTY1,FTY2,Ke1,FJGS1,FJGS2}. The latter analysis has been extended in \cite{HZ1} to studying the interaction of a slow soliton of the cubic nonlinear Schr\"odinger equation with a delta impurity in one dimension, and in \cite{A-S1} to studying the effective dynamics of  solitons in time-dependent potentials in the space-adiabatic limit. The basic picture is that if the external potential changes spatially slowly compared to the size of the soliton (or when it is small in the case of a delta impurity), and if the initial condition is close to a solitary wave solution, then, over a long time scale (related to the external potential and the initial condition), the true solution of the initial value problem is close to a soliton whose center of mass dynamics is approximately that of a point particle moving in an effective external potential. The novelty in this paper is studying the effective dynamics of solitons in the presence of  spatially rough and nonlinear perturbations, and also introducing in the analysis of soliton dynamics a useful Lyapunov-Schmidt mapping that is based on  the group structure of the soliton manifold. We rely on extending different parts of the analysis in \cite{FJGS1}-\cite{A-S1} to the case at hand, and also on other important developments in the theory of nonlinear Schr\"odinger equations, \cite{We1,We2,GSS1,GSS2}; see \cite{Ca1,Su1} for an overview. \footnote{We note that the problem of soliton dynamics in a slowly varying time-independent potential that was studied in \cite{FJGS1} is revisited in \cite{HZ2} for the special case of a cubic NLS equation in one dimension.} We hope that along the way, we clarify basic general concepts that are helpful in the study of long-time dynamics of solitons. \footnote{We note that the analysis presented below can be directly extended to the case of more general nonlinearities, such a sum of local and Hartree nonlinearities, and to  more general nonlinear perturbations, such as those with $\lambda$ slowly varying with respect to the size of the soliton, \cite{A-S2}.}

The organization of this paper is as follows. We first recall useful general properties of the nonlinear Schr\"odinger equation and the soliton manifold in Sect. \ref{sec:Revision}. In Sect. \ref{sec:ProofMain}, we prove the main result, Theorem \ref{th:MainResult}. 

\vspace{0.5cm}
\noindent {\bf Acknowledgements}
\vspace{0.5cm}

\noindent I thank J. Fr\"ohlich, G.-M. Graf, B.L.G. Jonsson, E. Lenzmann and I.M. Sigal for stimulating discussions. The partial financial support of NSERC grant NA 7901 is gratefully acknowledged. 


\section{Some properties of the NLS equation and the soliton manifold}\label{sec:Revision}

In this section, we recall some properties of the nonlinear Schr\"odinger equation and the soliton manifold. We will use these properties in the following sections. In what follows, we denote by 
\begin{equation}
\label{eq:LocalNL}
g(u):= |u|^{2s}u, \ \ u\in H^1(\bbR^N,\bbC),
\end{equation}
where $s\in (0,\frac{2}{N})$ appears in (\ref{eq:NLSE}), and we let the functionals $G,F:H^1(\bbR^N,\bbC)\rightarrow \bbR$ be such that $G'=g$ and $F'=f,$ where the prime corresponds to the Fr\'echet derivative. The nonlinear Schr\"odinger equation is 
\begin{equation}
\label{eq:NLSE2}
i\partial_t \psi = -\Delta \psi - g(\psi) + \epsilon f(x,\psi).
\end{equation}

\subsection{Symplectic, Hamiltonian and Variational structure}

The space $\Hone = H^1(\bbR^N,\bbR^2)$ as a real space, and it has a real inner product (Riemannian metric) 
\begin{equation}
\langle u,v \rangle :=\Re \int dx ~ u\overline{v} , 
\end{equation}
for $u,v\in \Hone,$ where $\overline{\cdot}$ stands for complex conjugation. \footnote{The tangent space ${\mathcal T}H^1=H^1.$} It is equipped with a symplectic form
\begin{equation}
\omega(u,v):= \Im \int  dx ~ u\overline{v} = \langle u,iv\rangle.
\end{equation}
The Hamiltonian functional corresponding to the nonlinear Schr\"odinger equation (\ref{eq:NLSE2}) is
\begin{equation}
\label{eq:NLHamiltonian}
H_\epsilon(\psi):=\frac{1}{2}\int |\nabla\psi|^2dx -G(\psi) +\epsilon F(\psi).
\end{equation}
Using the correspondence 
\begin{align*}
\Hone &\longleftrightarrow H^1(\bbR^N,\bbR) \oplus H^1(\bbR^N,\bbR)\\
\psi &\longleftrightarrow (\Re\psi,\Im\psi)\\
i^{-1}&\longleftrightarrow J,
\end{align*}
where $J:=\begin{pmatrix} 0 & 1 \\ -1 & 0 \end{pmatrix}$ is the complex structure on $H^1(\bbR^N,\bbR^{2}),$ the nonlinear Schr\"odinger equation (\ref{eq:NLSE2}) can be written as 
\begin{equation*}
\partial_t \psi = J H_\epsilon' (\psi).
\end{equation*}
In the following, we denote $H^1(\bbR^N)$ as either $H^1(\bbR^N,\bbC)$ or $H^1(\bbR^N,\bbR^2).$
We note that since the perturbation is time independent, the Hamiltonian functional $H_\epsilon$ defined in (\ref{eq:NLHamiltonian}) is autonomous, and energy is conserved, i.e., 
\begin{equation}
\label{eq:EnergyConservation}
H_\epsilon(\psi(t))=H_\epsilon(\psi(t=0)=\varphi), \ \ t\in\bbR.
\end{equation} 
Moreover, $H_\epsilon$ is invariant under global gauge transformations, 
\begin{equation*}
H_\epsilon(e^{i\gamma}\psi)=H_\epsilon(\psi),
\end{equation*}
and the associated conserved Noether charge is the ``mass'' \footnote{See for example \cite{Ca1}, Chapter 6, for a proof of (\ref{eq:EnergyConservation}) and (\ref{eq:ChargeConservation}).}
\begin{equation}
\label{eq:ChargeConservation}
N(\psi(t)):= \frac{1}{2}\int dx~ |\psi(t)|^2 = \frac{1}{2}\|\varphi\|_{L^2}^2 , \ \ t\in\bbR.
\end{equation}

As mentioned in Sect. \ref{sec:Introduction}, when $\epsilon=0,$ the nonlinear Schr\"odinger equation (\ref{eq:NLSE2}) admits travelling soliton solutions  that are of the form
\begin{equation*}
u(x,t) = e^{i\gamma + i \frac{1}{2} v\cdot (x-a)} \eta_\mu (x-a),
\end{equation*}
where $v\in \bbR^N$ is the velocity of the center of the soliton, $a=a_0 + v t \in \bbR^N$ denotes the center's position, $\gamma = \gamma_0 + \frac{v^2}{4}t \in \bbR$ is a phase, $\mu\in \bbR^+ ,$ and $\eta_\mu\in L^2(\bbR^N)\cap C^2(\bbR^N)$ is  positive, spherically symmetric and satisfies the nonlinear eigenvalue problem
\begin{equation}
\label{eq:NLEigenValue}
(-\Delta + \mu)\eta_\mu -\eta_\mu^3 =0,
\end{equation}
such that 
\begin{equation}
\label{eq:SolitonDecay}
\||x|^3\eta_\mu\|_{L^2}+\||x|^2|\nabla \eta_\mu|\|_{L^2} + \||x|^2\partial_\mu \eta_\mu\|_{L^2}<\infty, 
\end{equation}
$\forall \mu \in \bbR^+,$ and $\eta_\mu$ decays at $\infty$ as $e^{-\sqrt{\mu} \|x\|}.$ \footnote{See for example \cite{Ca1}, Chapter 8, and references therein.} Furthermore, the ``mass'' of the soliton is 
\begin{equation}
\label{eq:Mass}
m(\mu)=\frac{1}{2}\int dx ~\eta_\mu^2 = \mu^{\frac{1}{s}-\frac{N}{2}}.
\end{equation}
Note that
\begin{equation}
\label{eq:OrbitalStability}
\partial_\mu m(\mu) = (\frac{1}{s}-\frac{N}{2}) \mu^{\frac{1}{s}-\frac{N}{2}-1} >0
\end{equation} 
since $s\in (0,\frac{2}{N}),$ which implies orbital stability, \cite{GSS1,GSS2}.  
We define
\begin{equation}
\label{eq:Hessian}
\cL_\mu := -\Delta +\mu -g'(\eta_\mu),
\end{equation}
which is the Fr\'{e}chet derivative of the map $\psi\rightarrow (-\Delta+\mu)\psi -g(\psi)$ evaluated at $\eta_\mu.$ For all $\mu\in \bbR^+,$ the null space
\begin{equation*}
\cN(\cL_\mu) = span\{i\eta_\mu, \partial_{x_j}\eta_\mu, j=1,\cdots, N\},
\end{equation*}
see for example the Appendix in \cite{FJGS1}.

Orbital stability (\ref{eq:OrbitalStability}) implies that $\eta_\mu$ is a local minimizer of $H_{\epsilon=0}(\psi)$ restricted to the balls ${\mathcal B}_{m(\mu)}:= \{\psi\in H^1 : N(\psi)=m(\mu)\},$ where $m(\mu)$ is given in (\ref{eq:Mass}), see \cite{GSS1,GSS2}. They are critical points of the functional 
\begin{equation}
\label{eq:EnergyFunctional}
\cE_\mu (\psi) := \frac{1}{2} \int dx~ (|\nabla\psi|^2 +\mu|\psi|^2)-G(\psi).
\end{equation}

\subsection{Symmetries}

When $\epsilon=0,$ the nonlinear Schr\"odinger equation (\ref{eq:NLSE2}) is invariant under spatial translations
\begin{equation*}
T_a^{tr}: \psi(x,t)\rightarrow \psi(x-a,t), \ \ a\in \bbR^N,
\end{equation*}
spatial rotations 
\begin{equation*}
T_R^r: \psi(x,t)\rightarrow \psi(R^{-1}x,t), \ \ R\in SO(N),
\end{equation*}
time translations
\begin{equation*}
T_r^{t}: \psi(x,t)\rightarrow \psi(x,t-r),
\end{equation*}
gauge transformations
\begin{equation*}
T_\gamma^g : \psi(x,t)\rightarrow e^{i\gamma}\psi(x,t), \ \ \gamma \in [0,2\pi),
\end{equation*}
and Galilean transformations (boosts) 
\begin{equation*}
T_v^b: \psi(x,t)\rightarrow e^{\frac{i}{2}v\cdot x} \psi(x,t), \ \ v\in \bbR^N.
\end{equation*}
The conserved Noether quantities are the field momentum, angular momentum, energy, mass, and {\it center of mass motion},
\begin{equation*}
\int \bar{\psi}(-i\nabla)\psi , \ \ \int \overline{\psi} (x\wedge-i\nabla)\psi ,   \ \ \frac{1}{2}\int |\nabla\psi|^2 -G(\psi),  \ \ \frac{1}{2} \int |\psi|^2 , \ \ 
\int \bar{\psi} (x+2 i t \nabla)\psi .
\end{equation*}
Furthermore, when $\epsilon=0,$ (\ref{eq:NLSE2}) is invariant under complex conjugation
\begin{equation*}
T^c: \psi(x,t)\rightarrow \overline{\psi}(x,t)
\end{equation*}
and rescaling  
\begin{equation*}
T_\mu^s : \psi(x,t)\rightarrow \mu^{\frac{1}{2s}}\psi(\sqrt{\mu}x,\mu t).
\end{equation*}

\subsection{Soliton Manifold}\label{sec:SolitonManifold}

We introduce the {\it combined} transformation 
$$T_\sigma, \ \ \sigma =(a,v,\gamma,\mu)\in\bbR^N\times\bbR^N\times [0,2\pi)\times \bbR^+,$$
which is given by
\begin{equation}
\label{eq:CombinedTransformation}
T_\sigma \psi:= T_a^{tr} T_v^{b} T^g_\gamma T^s_\mu \psi = e^{i(\frac{1}{2}v\cdot (x-a)+\gamma)}\mu^{\frac{1}{s}}\psi(\sqrt{\mu}(x-a),\mu t),
\end{equation}
where $a,v \in\bbR^N,$ $\gamma\in [0,2\pi)$ and $\mu\in\bbR^+.$ We define the soliton manifold as 
\begin{equation}
\label{eq:SolitonManifold}
\cM_s:= \{\eta_\sigma = T_\sigma \eta_1 : \sigma = (a,v,\gamma,\mu) \in \bbR^N\times\bbR^N\times [0,2\pi)\times \bbR^+ \},
\end{equation}
where $\eta_1$ satisfies (\ref{eq:NLEigenValue}) with $\mu=1.$
The tangent space to the soliton manifold $\cM_s$ at $\eta_{\sigma_0}\in \cM_s$ is given by 
\begin{equation*}
\cT_{\eta_{\sigma_0}}\cM_s = span\{ z_t,z_g,z_b,z_s\},
\end{equation*}
where
\begin{align*}
z_t &:= \nabla_a T_a^{tr}\eta_{\sigma_0}|_{a=0}=-\nabla\eta_{\sigma_0}\\
z_g &:= \partial_\gamma T_\gamma^g \eta_{\sigma_0}|_{\gamma=0} = i\eta_{\sigma_0}\\
z_b &:= 2\nabla_v T_v^{b}\eta_{\sigma_0}|_{v=0}=ix\eta_{\sigma_0}\\
z_s &:= \partial_\mu T_\mu^s \eta_{\sigma_0}|_{\mu=1} = (\frac{1}{2s}+\frac{1}{2}x\cdot \partial_x)\eta_{\sigma_0}.
\end{align*}
In what follows, we denote 
\begin{align}
&e_i = -\partial_{x_i} , \ \ e_{i+N} = i x_i, \ \ i=1,\cdots ,N, \label{eq:Generators}\\
&e_{2N+1} = i ,\ \ e_{2N+2} = \frac{1}{2s}+\frac{1}{2}x\cdot \partial_x .\nonumber
\end{align}

The soliton manifold $\cM_s$ inherits a symplectic structure from $(H^1,\omega).$ For $\sigma = (a,v,\gamma,\mu) \in \bbR^N\times\bbR^N\times [0,2\pi)\times \bbR^+,$
\begin{equation*}
\Omega_\sigma^{-1} := J^{-1}|_{\cT_{\eta_\sigma}} = P_\sigma J^{-1} P_\sigma ,
\end{equation*}
where $P_\sigma$ is the $L^2$ orthogonal projection onto $\cT_{\eta_\sigma}\cM_s.$ It can be shown that $\Omega_\sigma^{-1}$ is invertible since $\partial_\mu m(\mu)>0,$ where the mass $m(\mu)=\frac{1}{2}\int dx ~\eta_\mu^2 = \mu^{\frac{1}{s}-\frac{2}{N}},$ see \cite{FJGS1}. Explicitly,
\begin{equation}
\label{eq:Metric}
\Omega_{\mu}^{-1}|_{\{ e_k \eta_\mu \}} := (\langle e_j\eta_\mu , J^{-1} e_k\eta_\mu\rangle)_{1\le j,k\le 2N+2} = 
\left(
\begin{matrix}
0 & -m(\mu) {\mathbf 1} & 0 & 0 \\
m(\mu) {\mathbf 1} & 0 & 0 & 0 \\
0 & 0 & 0 & m'(\mu) \\
0 & 0 & -m'(\mu) & 0
\end{matrix}
\right),
\end{equation}
where $e_k\eta_\mu, k=1,\cdots , 2N+2,$ are basis vectors of $\cT_{\eta_\mu} \cM_s,$ ${\mathbf 1}$ is the $N\times N$ identity matrix, and $m(\mu)=\mu^{\frac{1}{s}-\frac{N}{2}} ,  \ \ m'(\mu) = \partial_\mu m(\mu).$ Note that $\Omega_\sigma^{-1}$ is related to $\Omega_\mu^{-1}$ by a similarity transformation.

We note that the soliton $\eta_\mu$ breaks the translation and gauge symmetries of the nonlinear Schr\"odinger equation, which leads to associated zero modes of the {\it Hessian} $\cL_\mu,$ which is defined in (\ref{eq:Hessian}). Differentiating ${\mathcal E}_\mu' (T_a^{tr} T_\gamma^g \eta_\mu)=0$ with respect to $a$ and $\gamma$ and setting the latter two to zero gives 
\begin{equation}
\label{eq:ZeroModes1}
\cL_\mu z_t = 0 \; , \ \ \ \  \cL_\mu z_g = 0 \; , 
\end{equation}
while a direct computation gives
\begin{equation}
\label{eq:ZeroModes2}
\cL_\mu z_b = 2i z_t \; , \;  \cL_\mu z_s = i z_g .
\end{equation}

\subsection{Skew-Orthogonal Decomposition}\label{sec:SkewOrtho}

Consider the manifold $\cM_s' = \{\eta_\sigma, \ \ \sigma\in \Sigma_0\}, \ \ \Sigma_0 = \bbR^N\times \bbR^N \times [0,2\pi) \times I_0,$ where $I_0\subset I\backslash \partial I$ and $I\subset\bbR^+$ is bounded. We define the $\delta$ neighbourhood of $\cM_s'$ in $H^1$ as 
\begin{equation*}
U_\delta := \{ \psi\in H^1 , \ \ \inf_{\sigma\in \Sigma_0} \|\psi - \eta_\sigma\|_{H^1}\le \delta\}.
\end{equation*}
Then, for $\delta$ small enough and for all $\psi\in U_\delta,$ there exists a unique $\sigma(\psi)\in C^1(U_\delta,\Sigma)$ such that 
\begin{equation*}
\omega(\psi-\eta_{\sigma(\psi)}, z)= \langle \psi - \eta_{\sigma(\psi)}, J^{-1}z\rangle =0,
\end{equation*}
for all $z\in\cT_{\eta_{\sigma(\psi)}}\cM_s.$ For a proof of this statement, we refer the reader to \cite{FJGS1}, see also \cite{Ar1}.

\subsection{Group Structure}\label{sec:GroupStructure}
The action of the combined transformation $T_{\sigma=(a,v,\gamma,\mu)}$ defined in (\ref{eq:CombinedTransformation}) on elements of the soliton manifold $\cM_s$ has a group structure ${\mathcal G}/{\mathbb Z},$ where ${\mathcal G}$ corresponds to the semidirect product ${\mathsf H}^{2N+1}\ltimes \bbR^+,$ and ${\mathsf H}^{2N+1}$ is the real Heisenberg group in $2N+1$ dimensions. The quotient with ${\mathbb Z}$ is taken since $\gamma$ is defined modulo $2\pi.$ This group structure has been first noted in \cite{HZ1} for the case $N=1,$ but its generalization to higher dimensions is straightforward. The action of ${\mathcal G}$ is given by 
\begin{equation*}
(a',v',\gamma',\mu')\cdot(a,v,\gamma,\mu)=(a'',v'',\gamma'',\mu''),
\end{equation*}
where 
\begin{equation*}
a''=\frac{a}{\sqrt{\mu'}}+a', \ \ v''=\sqrt{\mu'} v+v'
\end{equation*}
\begin{equation*} 
\gamma''=\gamma+\gamma'+\frac{1}{2\sqrt{\mu'}}v\cdot a' , \ \ and \ \  \mu''=\mu \mu'.
\end{equation*}
The anti-selfadjoint operators $\{e_j\}_{j=1,\cdots,2N+2}$ defined in (\ref{eq:Generators}) form the generators of the corresponding Lie algebra ${\mathsf g}.$ They satisfy the commutation relations
\begin{align}
&[e_i,e_{j+N}]=-e_{2N+1}\delta_{ij}, \ \ i,j= 1,\cdots,N , \label{eq:CommutationRelations}\\
&[e_i,e_{2N+2}] = \frac{1}{2}e_i , \ \ i=1,\cdots,N, \nonumber\\
&[e_{i+N},e_{2N+2}] = -\frac{1}{2} e_{i+N}, \ \ i=1,\cdots,N,\nonumber 
\end{align}
and the rest of the commutators are zero. Note that it follows from (\ref{eq:SolitonManifold}) that $\cM_s$ is the orbit of ${\mathcal G}/{\mathbb Z},$ and given $z\in\cT_{\eta_\sigma}\cM_s,$ $\exists ! \ \ Y_z\in {\mathsf g}$ such that $z=Y_z \eta_\sigma.$ 
By exponentiating, the group element 
\begin{equation}
\label{eq:GroupElement}
T_{\sigma=(a,v,\gamma,\mu)} = e^{-a\cdot \partial_x} e^{i\frac{v\cdot x}{2}} e^{i\gamma} e^{\log (\mu) (\frac{1}{2s} + \frac{1}{2} x\cdot \partial_x)}.
\end{equation}


\section{Proof of the main result}\label{sec:ProofMain}

The proof of the main result is based on three ingredients, as discussed in the introduction. 

\subsection{Reparametrized equations of motion}\label{sec:RepEqMotion}

In this subsection, we use skew-orthogonal projection to decompose the solution of (\ref{eq:NLSE}) into a component that belongs to the soliton manifold plus a fluctuation. We then use the group structure and a Lyapunov-Schmidt mapping to obtain effective equations for the parameters characterizing the component belonging to the soliton manifold. 

Given the initial value problem (\ref{eq:NLSE}), let $T$ be the maximum time such that the skew-orthogonal decomposition, defined in Subsect. \ref{sec:SkewOrtho}, holds for $\psi(t), 0\le t \le T,$ for some $\delta>0,$ i.e., there exists a unique $\sigma(t)$ such that
\begin{equation*}
\psi(t) = T_{\sigma(t)} \eta_1 + w'(t),
\end{equation*}
such that 
\begin{equation*}
\omega(w'(t), Y T_{\sigma(t)}\eta_1) = 0 ,
\end{equation*}
for all $Y\in {\mathsf g}.$ In what follows, we denote $\eta_1$ by $\eta.$ Let $w(t) = T_{\sigma(t)}^{-1} w'(t).$ Then 
\begin{equation}
\label{eq:SODecomp}
\psi(t) = T_{\sigma(t)}u(t), \ \ u(t)=\eta + w(t) .
\end{equation}
It follows from (\ref{eq:CommutationRelations}) and the Baker-Campbell-Hausdorff formula, 
\begin{equation}
\label{eq:BCH}
e^X Y e^{-X} = e^{ad_X} Y, \ \ ad_X = [X,\cdot], 
\end{equation} 
for $X,Y\in {\mathsf g},$ that $T_\sigma YT_\sigma^{-1} \in {\mathsf g}$ if $Y\in {\mathsf g}.$ Furthermore, it follows from translational invariance that $\omega(T_\sigma u, T_\sigma v)=\mu^{\frac{1}{s}-\frac{N}{2}} \omega(u,v),$ for $u,v\in L^2,$ and hence 
\begin{align}
\label{eq:SO}
\omega(w(t), Y \eta) &=  \omega (T^{-1}_{\sigma(t)}w'(t), T^{-1}_{\sigma(t)} (T_{\sigma(t)} Y T^{-1}_{\sigma(t)}) T_{\sigma(t)} \eta ) \\ &=\mu^{-(\frac{1}{s}-\frac{N}{2})} \omega ( w'(t), Y'T_{\sigma(t)} \eta ) = 0, \nonumber 
\end{align}
$\forall Y\in {\mathsf g},$ where $Y'=T_{\sigma(t)} Y T^{-1}_{\sigma(t)}\in {\mathsf g}.$

We rewrite the nonlinear Schr\"odinger equation (\ref{eq:NLSE2}) as 
\begin{equation}
\label{eq:NLSE3}
\partial_t (T_{\sigma(t)} (\eta+w(t))) = -i H_\epsilon'(T_{\sigma(t)} (\eta+w(t)),
\end{equation}
where $H_\epsilon$ is defined in (\ref{eq:NLHamiltonian}) and the prime stands for the Fr\'echet derivative.

Suppose that, for the interval $[0,T],$ $$\{\sigma(t)=(a(t),v(t),\gamma(t),\mu(t))\}_{t\in [0,T]}$$ is a differentiable path in $\bbR^N\times\bbR^N\times \bbR\times \bbR^+.$ Then, differentiating $T_{\sigma(t)}$ with respect to $t$ and using (\ref{eq:CommutationRelations}) and the Baker-Campbell-Hausdorff formula (\ref{eq:BCH}) gives
\begin{equation}
\label{eq:DiffGroupElement}
\partial_t T_{\sigma(t)} = T_{\sigma(t)} \ \  X(\sigma(t)) 
\end{equation}
where $X(\sigma(t))\in {\mathsf g}$ is given by
\begin{equation}
\label{eq:X}
X(\sigma(t)) = \sum_{i=1}^N \dot{a}_i \sqrt{\mu } e_i + \sum_{i=1}^N \frac{\dot{v}_i}{2\sqrt{\mu}} e_{i+N} + (\dot{\gamma} -\frac{\dot{a} \cdot v}{2})e_{2N+1} + \frac{\dot{\mu}}{\mu} e_{2N+2},
\end{equation}
and the dot stands for $\partial_t.$

Using (\ref{eq:DiffGroupElement}), it follows that 
\begin{equation*}
\partial_t (T_{\sigma(t)} (\eta+w(t)) = T_{\sigma(t)}[X(\sigma(t))\eta + X(\sigma(t))w(t) + \partial_t w(t)],
\end{equation*}
and, together with (\ref{eq:NLSE3}), this implies
\begin{equation}
\label{eq:RateFluctuation1}
\partial_t w(t) = -X(\sigma(t))\eta - X(\sigma(t))w(t) -i T^{-1}_{\sigma(t)} H_\epsilon'(T_{\sigma(t)} (\eta+w(t)).
\end{equation}
Using (\ref{eq:NLHamiltonian}) and (\ref{eq:CombinedTransformation}) in Sect. \ref{sec:Revision}, a direct computation gives
\begin{align*}
T^{-1}_{\sigma(t)} H_\epsilon'(T_{\sigma(t)} (\eta+w(t)) = & \frac{v^2}{4}(\eta+w(t))-\mu [\Delta (\eta+w(t)) + |\eta+w(t)|^{2s}(\eta+w(t))]  \\ &-i\sqrt{\mu}v\cdot\partial_x(\eta+w(t))  +\epsilon \mu^{\frac{\tilde{s}}{s}} \lambda(\frac{x+a}{\sqrt{\mu}}) |\eta+w(t)|^{2\tilde{s}} (\eta+w(t)).
\end{align*}
Substituting back in (\ref{eq:RateFluctuation1}) and using (\ref{eq:NLEigenValue}) gives
\begin{equation}
\label{eq:RateFluctuation2}
\partial_t w = \tilde{X}(t)(\eta+w(t)) -i\mu \cL w(t) -i\epsilon \mu^{\frac{\tilde{s}}{s}} \lambda(\frac{x+a}{\sqrt{\mu}}) \eta^{2\tilde{s}+1} + i\mu \tilde{g}(\eta,w) -i\epsilon \tilde{f}(\frac{x+a}{\sqrt{\mu}},\eta,w), 
\end{equation}
where 
\begin{align}
\tilde{X}(t)& := -X(\sigma(t)) + \sum_{i=1}^N v_i\sqrt{\mu} e_i + (\mu-\frac{v^2}{4}) e_{2N+
1}\label{eq:TildeX} \\
&= \sum_{i=1}^N (v_i-\dot{a}_i) \sqrt{\mu } e_i + \sum_{i=1}^N \frac{-\dot{v}_i}{2\sqrt{\mu}} e_{i+N} + (-\dot{\gamma} +\frac{\dot{a} \cdot v}{2}- \frac{v^2}{4})e_{2N+1} - \frac{\dot{\mu}}{\mu} e_{2N+2},\nonumber
\end{align}
and 
\begin{equation*}
\cL = -\Delta +1 - g'(\eta)
\end{equation*}
the Hessian defined in (\ref{eq:Hessian}), evaluated at $\eta,$
\begin{equation}
\label{eq:TildeG}
\tilde{g}(\eta,w)= g(\eta+w(t)) - g(\eta) - g'(\eta)w(t),
\end{equation}
\begin{equation}
\label{eq:TildeF}
\tilde{f}(x,\eta,w) = \mu^{\frac{\tilde{s}}{s}}\lambda(x)[|\eta + w|^{2\tilde{s}}(\eta +w)-\eta^{2\tilde{s}+1}].
\end{equation}
Note that it follows from (\ref{eq:Perturbation1}), (\ref{eq:LocalNL}), (\ref{eq:TildeG}) and (\ref{eq:TildeF}) that 
\begin{equation}
\label{eq:TildeGUpperBd}
\|\tilde{g}(\eta,w)\|_{H^{-1}} \le C \|w\|^2_{H^1}
\end{equation}
and
\begin{equation}
\label{eq:TildeFUpperBd}
\|\tilde{f}(x,\eta,w)\|_{H^{-1}} \le C \|w\|_{H^1},
\end{equation}
uniformly in $t\in \bbR,$ where $C$ is a finite positive constant independent of $\epsilon.$

We now introduce the {\it Lyapunov-Schmidt mapping} $P: L^2 \rightarrow {\mathsf g}$ given by
\begin{equation}
\label{eq:SOProj}
P (\cdot):= \sum_{\alpha=1}^{2N+2} P_\alpha (\cdot) e_\alpha,
\end{equation}
where $e_\alpha, \alpha=1,\cdots ,2N+2,$ are given in (\ref{eq:Generators}) in Sec. \ref{sec:Revision}, and 
\begin{align}
\label{eq:Coefficients}
&P_i (\cdot) := \frac{-1}{m(\mu=1)} \omega (\cdot, e_{i+N}\eta) = -\omega (\cdot, e_{i+N}\eta), \ \ i=1,\cdots,N, \\
&P_{i+N}(\cdot) := \frac{1}{m(\mu=1)} \omega (\cdot, e_{i}\eta) = \omega (\cdot, e_{i}\eta) , \ \ i=1,\cdots,N  \nonumber \\
&P_{2N+1}(\cdot) := \frac{1}{m'(\mu=1)}\omega (\cdot, e_{2N+2}\eta) = \frac{2s}{2-sN} \omega (\cdot, e_{2N+2}\eta), \nonumber \\
&P_{2N+2} (\cdot) := \frac{-1}{m'(\mu=1)}\omega (\cdot, e_{2N+1}\eta) = -\frac{2s}{2-sN} \omega (\cdot, e_{2N+1}\eta). \nonumber 
\end{align}
This mapping is well-defined since it follows from (\ref{eq:SolitonDecay}) in Sec. \ref{sec:Revision} that $X\eta\in L^2$ for all $X = \sum_{\alpha=1}^{2N+2} X_\alpha e_\alpha \in {\mathsf g}$ such that $\|X\|:=\sup_{\alpha\in \{1,\cdots ,2N+2\}}|X_\alpha|<\infty.$ Note that for $X\in {\mathsf g},$ 
\begin{equation}
\label{eq:ProjId}
P(X\eta)=\sum_{\alpha,\beta=1}^{2N+2}X_\alpha P_\beta(e_\alpha \eta ) e_\beta = \sum_{\alpha,\beta=1}^{2N+2} X_\alpha \delta_{\alpha\beta} e_\beta = X.
\end{equation}

For $t\in [0,T],$ we introduce $Y(t)\in {\mathsf g}$ that is given by 
\begin{equation}
\label{eq:Y}
Y(t) : = \tilde{X}(t) -\epsilon \mu^{\frac{\tilde{s}}{s}} P(i\lambda(\frac{x+a}{\sqrt{\mu}}) \eta^{2\tilde{s}+1}),
\end{equation}  
where $\tilde{X}$ is given in (\ref{eq:TildeX}). The motivation for this choice of $Y$ will become clear below. 

\begin{proposition}\label{pr:ReparametrizedEq}
Suppose $\psi$ satisfies (\ref{eq:NLSE2}) such that $\psi(t)\in U_\delta,$ defined in Subsect. \ref{sec:SkewOrtho}, for $ t\in [0,T]$ and some $\delta >0,$ and let $w,\eta$ and $Y$ be as defined above. Then, for $\|w(t)\|_{H^1}<c,$ $c\in (0,1)$ independent of $\epsilon,$ we have 
\begin{equation*}
\|Y(t)\| \le C(\epsilon \|w(t)\|_{H^1} + \|w(t)\|_{H^1}^2),
\end{equation*}
where $\|Y\| := \sup_{\alpha\in\{ 1,\cdots,2N+2\} }|Y_\alpha|$ and $C$ is a positive constant independent of $t\in [0,T]$ and $\epsilon.$
\end{proposition}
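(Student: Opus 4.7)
The plan is to apply the Lyapunov--Schmidt projection $P$ defined in (\ref{eq:SOProj}) to the evolution equation (\ref{eq:RateFluctuation2}) for $w(t)$, and to read off $Y(t)$ from the resulting identity. The key observations are: (i) since the skew-orthogonality condition states $\omega(w(t), Y\eta)=0$ for all $Y\in{\mathsf g}$, taking $Y=e_\alpha$ yields $P_\alpha(w(t))=0$ for all $\alpha=1,\ldots,2N+2$, and since $P$ is built with the fixed profile $\eta=\eta_1$ (independent of $t$) we get $P(\partial_t w)=\partial_t P(w)=0$; (ii) by (\ref{eq:ProjId}), $P(\tilde X(t)\eta)=\tilde X(t)$ because $\tilde X(t)\in{\mathsf g}$. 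Applying $P$ to (\ref{eq:RateFluctuation2}) and using these two facts gives
\begin{equation*}
0 = \tilde X(t) + P(\tilde X(t)w) - \mu P(i\cL w) - \epsilon\mu^{\frac{\tilde s}{s}} P\bigl(i\lambda(\tfrac{x+a}{\sqrt\mu})\eta^{2\tilde s+1}\bigr) + \mu P(i\tilde g(\eta,w)) - \epsilon P(i\tilde f).
\end{equation*}
Rearranging and recalling the definition (\ref{eq:Y}) of $Y$ yields
\begin{equation*}
Y(t) = -P(\tilde X(t)w) + \mu P(i\cL w) - \mu P(i\tilde g(\eta,w)) + \epsilon P(i\tilde f),
\end{equation*}
which is the master identity for the proof.

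The first step, and the main subtlety, is to show that \emph{$P(i\cL w)=0$}. Each component has the form $P_\beta(i\cL w) = c_\beta\,\omega(i\cL w,e_\gamma\eta) = c_\beta\langle \cL w, e_\gamma\eta\rangle$ after expanding the symplectic form as $\omega(u,v)=\langle u,iv\rangle$ and using that the generators $e_\alpha$ commute with $i$. Since $\cL$ is self-adjoint with respect to the real inner product $\langle\cdot,\cdot\rangle$, this equals $c_\beta\langle w,\cL e_\gamma\eta\rangle$. Now the relations (\ref{eq:ZeroModes1}) and (\ref{eq:ZeroModes2}) show that $\cL$ maps the tangent space $\cT_\eta\cM_s=\mathrm{span}\{e_\alpha\eta\}$ into $i\cdot\cT_\eta\cM_s$: explicitly, $\cL e_i\eta=\cL e_{2N+1}\eta=0$, $\cL e_{i+N}\eta = 2ie_i\eta$, and $\cL e_{2N+2}\eta = ie_{2N+1}\eta=-\eta$, so in every case $\cL e_\gamma\eta$ is a real multiple of some $ie_\alpha\eta$ (or is zero). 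Hence $\langle w,\cL e_\gamma\eta\rangle$ is (up to a constant) $\omega(w,e_\alpha\eta)$, which vanishes by skew-orthogonality. This kills the $\cL$-term.

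It remains to estimate the three remaining terms componentwise. For $P(\tilde X(t)w)$, writing $\tilde X=\sum X_\alpha e_\alpha$ and using the anti-self-adjointness $e_\alpha^* = -e_\alpha$ together with the commutation of $e_\alpha$ with $i$, each component is controlled by $|X_\alpha|\cdot|\omega(w,e_\alpha e_\gamma\eta)|\le C\|\tilde X\|\,\|w\|_{L^2}$, where $\|e_\alpha e_\gamma\eta\|_{L^2}$ is finite by the decay estimates (\ref{eq:SolitonDecay}). For the nonlinear contributions, $|P_\beta(i\tilde g)| = |c_\beta\langle\tilde g,e_\gamma\eta\rangle|\le C\|\tilde g\|_{H^{-1}}\|e_\gamma\eta\|_{H^1}\le C\|w\|_{H^1}^2$ by (\ref{eq:TildeGUpperBd}), and similarly $|P_\beta(i\tilde f)|\le C\|w\|_{H^1}$ by (\ref{eq:TildeFUpperBd}). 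Combining, one obtains
\begin{equation*}
\|Y\|\le C\|\tilde X\|\,\|w\|_{H^1} + C\|w\|_{H^1}^2 + C\epsilon\|w\|_{H^1}.
\end{equation*}

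Finally, from the defining relation (\ref{eq:Y}) for $Y$ and the uniform bound $\|P(i\lambda(\cdot)\eta^{2\tilde s+1})\|\le C$ (which follows from $\lambda\in L^\infty$ and the decay of $\eta$), we get $\|\tilde X\|\le \|Y\|+C\epsilon$. Substituting back yields
\begin{equation*}
\|Y\|\le C\|w\|_{H^1}\|Y\| + C\epsilon\|w\|_{H^1} + C\|w\|_{H^1}^2,
\end{equation*}
and choosing $c\in(0,1)$ small enough that $Cc<1/2$ lets us absorb the $\|Y\|$-term on the left, giving the desired bound $\|Y(t)\|\le C(\epsilon\|w\|_{H^1}+\|w\|_{H^1}^2)$. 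The hardest part is the algebraic identification $P(i\cL w)=0$; the rest of the argument is a direct combination of skew-orthogonality, the decay estimates (\ref{eq:SolitonDecay}), and the Sobolev bounds (\ref{eq:TildeGUpperBd})--(\ref{eq:TildeFUpperBd}).
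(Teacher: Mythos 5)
Your proposal is correct and follows essentially the same route as the paper: apply the projection $P$ to (\ref{eq:RateFluctuation2}), use $P(\partial_t w)=0$ from skew-orthogonality and $P(\tilde X\eta)=\tilde X$, kill the $\cL$-term via self-adjointness and the zero-mode relations (\ref{eq:ZeroModes1})--(\ref{eq:ZeroModes2}), estimate the remaining projections, and absorb the $\|Y\|\,\|w\|$ contribution. You in fact make explicit two steps the paper leaves implicit (the componentwise verification of $P(i\cL w)=0$ and the substitution $\|\tilde X\|\le\|Y\|+C\epsilon$ with the smallness absorption), so nothing further is needed.
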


\begin{proof}

It follows from (\ref{eq:SO}) and (\ref{eq:Coefficients}) that 
\begin{equation}
P(w(t)) = 0,
\end{equation}
and, since $P$ is independent of $t,$ $P(\partial_t w(t))=\partial_t P(w(t))=0.$ Together with (\ref{eq:RateFluctuation2}) and (\ref{eq:ProjId}), this implies 
\begin{align}
\label{eq:ProjRateFluctuation}
0&=P(\partial_t w(t)) \\ &= \tilde{X}(t)+P(\tilde{X}(t)w(t)) - \mu P(i\cL w(t)) - \epsilon \mu^{\frac{\tilde{s}}{s}} P(i\lambda(\frac{x+a}{\sqrt{\mu}}) \eta^{2\tilde{s}+1}) + \mu P(i\tilde{g}(\eta,w))  \nonumber \\ &- \epsilon P(i\tilde{f}(\frac{x+a}{\sqrt{\mu}},\eta,w)).\nonumber
\end{align}
Since $\cL$ is selfadjoint, it follows from (\ref{eq:ZeroModes1}) and (\ref{eq:ZeroModes2}) in Sec. \ref{sec:Revision} and (\ref{eq:SO}) that 
\begin{equation*}
\omega (i\cL w(t), e_\alpha\eta) = -\omega(w(t), i\cL e_\alpha\eta) = 0 ,\ \ \alpha=1,\cdots, 2N+2,
\end{equation*}
and hence 
\begin{equation}
\label{eq:ProjLFluctuation}
P(i\cL w(t))=0.
\end{equation}
Now, (\ref{eq:Y})-(\ref{eq:ProjLFluctuation}) imply 
\begin{equation}
\label{eq:ProjRateFluctuation2}
Y(t) = -P(\tilde{X}(t)w(t)) - \mu P(i\tilde{g}(\eta,w)) + \epsilon P(i\tilde{f}(\frac{x+a}{\sqrt{\mu}},\eta,w)).
\end{equation}
We want to find an upper bound for the absolute value of the right hand side of (\ref{eq:ProjRateFluctuation2}). In what follows, we denote by $C$ a positive constant that is independent of $\epsilon,$ and this constant changes from one line to another.

It follows from H\"older's inequality and (\ref{eq:SolitonDecay}) in Sec. \ref{sec:Revision} that
\begin{equation*}
|\omega (X w, e_\alpha \eta)| \le C \|X\| \|w\|_{L^2}, \ \ \alpha=1,\cdots ,2N+2,
\end{equation*}
and hence 
\begin{equation}
\label{eq:ProjXFluctuation}
\|P(X(t)w(t))\| \le C\|X\| \|w(t)\|_{L^2}.
\end{equation}
It also follows from (\ref{eq:SolitonDecay}) and (\ref{eq:TildeG}) that
\begin{equation*}
|\omega(i\tilde{g}(\eta,w(t)), e_\alpha \eta)| \le C \|w(t)\|_{H^1}^2, \ \ \alpha=1,\cdots ,2N+2,
\end{equation*}
and from (\ref{eq:SolitonDecay}) and (\ref{eq:TildeFUpperBd}) that 
\begin{equation*}
|\omega(i\tilde{f}(\frac{x+a}{\sqrt{\mu}},\eta,w(t)), e_\alpha\eta)| \le C \|w(t)\|_{H^1}, \ \ \alpha=1,\cdots ,2N+2.
\end{equation*}
Therefore, we have 
\begin{equation}
\|P(i\tilde{g}(\eta,w(t))\| \le C \|w(t)\|_{H^1}^2, \ \ \|P(i\tilde{f}(\frac{x+a}{\sqrt{\mu}},\eta,w(t)))\| \le C \|w(t)\|_{H^1}.
\label{eq:ProjPot}
\end{equation}
Now, (\ref{eq:ProjRateFluctuation2})-(\ref{eq:ProjPot}) give 
\begin{equation}
\|Y(t)\| \le C(\|Y(t)\|\|w(t)\|_{L^2} + \epsilon \|w(t)\|_{H^1} + \|w(t)\|_{H^1}^2),
\end{equation}
and hence the claim of the proposition. \end{proof}

We have the following corollary.
\begin{corollary}\label{cr:ReparametrizedEqMotion}
Suppose that the conditions of Proposition \ref{pr:ReparametrizedEq} hold. Then, for $t\in [0,T],$ 
\begin{align*}
\partial_t a & = v + O(\epsilon \|w(t)\|_{H^1} + \|w(t)\|_{H^1}^2), \\
\partial_t v & = - 2\sqrt{\mu} \ \ \nabla_a V_{eff,\mu}(a) + O(\epsilon \|w(t)\|_{H^1} + \|w(t)\|_{H^1}^2), \\ 
\partial_t \gamma & = \mu  + \frac{\partial_t a \cdot v}{2} - \frac{1}{4}v^2    
 -  \frac{2-Ns+2\tilde{s}}{2-Ns} V_{eff,\mu}(a) + \frac{s}{2-Ns} B_{eff,\mu}(a)\\ &+ O(\epsilon \|w(t)\|_{H^1} + \|w(t)\|_{H^1}^2), \\ 
\partial_t{\mu} &= O(\epsilon \|w(t)\|_{H^1} + \|w(t)\|_{H^1}^2), 
\end{align*}
where 
\begin{equation*}
V_{eff,\mu}(a) := \frac{\epsilon\mu^{\frac{\tilde{s}}{s}}}{2+2\tilde{s}} \int ~ dx \lambda(\frac{a+x}{\sqrt{\mu}})\eta^{2\tilde{s}+2}(x),
\end{equation*}
and
\begin{equation*}
B_{eff,\mu}(a) := \frac{\epsilon\mu^{\frac{\tilde{s}}{s}}}{2+2\tilde{s}} \nabla_a \cdot \int ~ dx \lambda(\frac{a+x}{\sqrt{\mu}})\eta^{2\tilde{s}+2}(x) x .
\end{equation*}
\end{corollary}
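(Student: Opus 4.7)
The plan is to read off the four differential equations as the four component-wise consequences of the estimate $\|Y(t)\|\le C(\epsilon\|w\|_{H^1}+\|w\|_{H^1}^2)$ from Proposition~\ref{pr:ReparametrizedEq}. By definition (\ref{eq:Y}) of $Y(t)$ together with the explicit expansion (\ref{eq:TildeX}) of $\tilde X(t)$, each component $Y_\alpha$ equals the corresponding explicit time-derivative expression in $(\dot a,\dot v,\dot\gamma,\dot\mu)$ minus $\epsilon\mu^{\tilde s/s}P_\alpha(i\lambda(\tfrac{x+a}{\sqrt{\mu}})\eta^{2\tilde s+1})$. So the real work is to evaluate these four scalar quantities using (\ref{eq:Coefficients}) and then solve each component equation for the relevant time derivative.

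First I would handle the two coefficients that vanish. For $\alpha=1,\dots,N$ one has $P_i(i\lambda\eta^{2\tilde s+1})=-\omega(i\lambda\eta^{2\tilde s+1},ix_i\eta)$, whose integrand $\lambda\eta^{2\tilde s+2}x_i$ is real and hence contributes $0$ to $\Im\int(\cdot)\,dx$; this immediately yields $\dot a=v+O(\epsilon\|w\|_{H^1}+\|w\|_{H^1}^2)$. The scaling component is equally short: $P_{2N+2}(i\lambda\eta^{2\tilde s+1})=-\tfrac{2s}{2-sN}\omega(i\lambda\eta^{2\tilde s+1},i\eta)$ has a real integrand $\lambda\eta^{2\tilde s+2}$ and therefore vanishes, giving $\dot\mu=O(\epsilon\|w\|_{H^1}+\|w\|_{H^1}^2)$.

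Next I would derive the evolution of $v$ from the boost components $\alpha=i+N$. Here $P_{i+N}(i\lambda\eta^{2\tilde s+1})=\omega(i\lambda\eta^{2\tilde s+1},-\partial_{x_i}\eta)$ evaluates (taking $\Im$) to $-\int\lambda(\tfrac{x+a}{\sqrt{\mu}})\eta^{2\tilde s+1}\partial_{x_i}\eta\,dx$. Using $\eta^{2\tilde s+1}\partial_{x_i}\eta=\tfrac{1}{2\tilde s+2}\partial_{x_i}(\eta^{2\tilde s+2})$, one integration by parts, and the identity $\partial_{x_i}\lambda(\tfrac{x+a}{\sqrt{\mu}})=\partial_{a_i}\lambda(\tfrac{x+a}{\sqrt{\mu}})$, one obtains $\epsilon\mu^{\tilde s/s}P_{i+N}(\cdot)=\partial_{a_i}V_{eff,\mu}(a)$. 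Equating this to $\tilde X_{i+N}=-\dot v_i/(2\sqrt{\mu})$ modulo the error gives $\dot v_i=-2\sqrt{\mu}\,\partial_{a_i}V_{eff,\mu}(a)+O(\epsilon\|w\|_{H^1}+\|w\|_{H^1}^2)$.

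The one genuine calculation is the phase component $\alpha=2N+1$, which is where $B_{eff,\mu}$ appears; this is the step I expect to be the main obstacle in keeping track of coefficients. Here $P_{2N+1}(i\lambda\eta^{2\tilde s+1})=\tfrac{2s}{2-sN}\omega\bigl(i\lambda\eta^{2\tilde s+1},(\tfrac{1}{2s}+\tfrac{1}{2}x\!\cdot\!\partial_x)\eta\bigr)$; the $\tfrac{1}{2s}$ piece gives a multiple of $\int\lambda\eta^{2\tilde s+2}\,dx$ while the $\tfrac{1}{2}x\!\cdot\!\nabla\eta$ piece, after writing $\eta^{2\tilde s+1}\nabla\eta=\tfrac{1}{2\tilde s+2}\nabla(\eta^{2\tilde s+2})$ and integrating by parts via $\nabla\!\cdot(\lambda x)=N\lambda+x\!\cdot\!\nabla_x\lambda$, splits into a further multiple of $\int\lambda\eta^{2\tilde s+2}\,dx$ plus $-\tfrac{1}{2(2\tilde s+2)}\nabla_a\!\cdot\!\int x\lambda(\tfrac{x+a}{\sqrt{\mu}})\eta^{2\tilde s+2}\,dx$, recognizable as multiples of $V_{eff,\mu}$ and $B_{eff,\mu}$ respectively. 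Collecting the prefactors gives $\epsilon\mu^{\tilde s/s}P_{2N+1}(\cdot)=\tfrac{2-Ns+2\tilde s}{2-Ns}V_{eff,\mu}(a)-\tfrac{s}{2-Ns}B_{eff,\mu}(a)$. Combining with $\tilde X_{2N+1}=-\dot\gamma+\tfrac{\dot a\cdot v}{2}+\mu-\tfrac{v^2}{4}$ and the error estimate produces the stated equation for $\dot\gamma$, completing the proof.
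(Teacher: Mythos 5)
Your proposal is correct and follows essentially the same route as the paper: read off the four components of the bound on $\|Y(t)\|$ from Proposition~\ref{pr:ReparametrizedEq} using (\ref{eq:TildeX}) and (\ref{eq:Coefficients}), note that the pairings with $ix\eta$ and $i\eta$ vanish because $\lambda$ and $\eta$ are real, and evaluate the pairings with $-\partial_x\eta$ and $(\tfrac{1}{2s}+\tfrac12 x\cdot\partial_x)\eta$ by integration by parts (equivalently, by differentiating the integrals in the parameter $a$, which avoids differentiating the rough $\lambda$) to produce $\nabla_a V_{eff,\mu}$ and the $V_{eff,\mu}$, $B_{eff,\mu}$ combination. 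Your bookkeeping is in fact the consistent one: you correctly retain the $+\mu$ in the $e_{2N+1}$ coefficient of $\tilde X$ and your prefactors reproduce exactly the constants $\tfrac{2-Ns+2\tilde s}{2-Ns}$ and $\tfrac{s}{2-Ns}$ in the stated corollary.
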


\begin{proof}
It follows from (\ref{eq:Y}) and Proposition \ref{pr:ReparametrizedEq} that 
\begin{equation}
\label{eq:Components0}
\|\tilde{X}(t) -\epsilon \mu^{\frac{\tilde{s}}{s}} P(i\lambda(\frac{x+a}{\sqrt{\mu}}) \eta^{2\tilde{s}+1})\| \le C (\epsilon\|w(t)\|_{H^1}+ \|w(t)\|_{H^1}^2),
\end{equation}
for some positive constant $C$ independent of $\epsilon$ and $t.$

Using (\ref{eq:SolitonDecay}), it follows by integration by parts that
\begin{align}
\label{eq:Components1}
\omega(i\tilde{f}(\frac{x+a}{\sqrt{\mu}},\eta,w(t)), -\partial_x \eta) &= \frac{1}{2\tilde{s}+2} \mu^{\frac{\tilde{s}}{s}} \partial_a \int ~ dx \lambda(\frac{x+a}{\sqrt{\mu}})\eta^{2\tilde{s}+2}(x), \\
\omega(i\tilde{f}(\frac{x+a}{\sqrt{\mu}},\eta,w(t)), (\frac{1}{2s}+\frac{1}{2}x\cdot\partial_x) \eta) &= (\frac{1}{2s}-\frac{N}{2\tilde{s}+2})\mu^{\frac{\tilde{s}}{s}} \int ~ dx \lambda(\frac{x+a}{\sqrt{\mu}})\eta^{2\tilde{s}+2}(x)  \\ &- \frac{1}{2+2\tilde{s}} \mu^{\frac{\tilde{s}}{s}} \partial_a \int ~ dx \lambda(\frac{x+a}{\sqrt{\mu}})\eta^{2\tilde{s}+2}(x) x.
\end{align}
Furthermore, since $\lambda$ and $\eta$ are real,
\begin{align}
&\omega(i\lambda(\frac{x+a}{\sqrt{\mu}}) \eta^{2\tilde{s}+1}, ix \eta) =0\\
&\omega(i\lambda(\frac{x+a}{\sqrt{\mu}}) \eta^{2\tilde{s}+1}, i\eta)=0.
\label{eq:Components2}
\end{align}
The proof follows directly from (\ref{eq:TildeX}), (\ref{eq:Coefficients}) and (\ref{eq:Components0})-(\ref{eq:Components2}). 
\end{proof}


\subsection{Control of the fluctuation}\label{sec:ControlFluctuation}

In this subsection, we use an approximate Lyapunov functional and the coercivity property of the Hessian $\cL_\mu$ to obtain an explicit control on the $H^1$ norm of the fluctuation. Our approach is a slight modification of that used in \cite{We1},\cite{We2}, \cite{FJGS1}-\cite{A-S1}. Recall that it follows from the skew-orthogonal decomposition, Subsect. \ref{sec:SkewOrtho}, that the solution of (\ref{eq:NLSE2}) can be rewritten as 
\begin{equation*}
\psi(t) = T_{av\gamma} (\eta_\mu(x) + w'(t,x)) \equiv T_a^{tr}T_v^{b}T_\gamma^{g} (\eta_\mu(x)+ w'(x,t)),
\end{equation*}
such that 
\begin{equation*}
\omega (w',Y\eta_\mu) = 0, \ \ \forall Y\in {\mathsf g}.
\end{equation*}
Here $w'=T_\mu^s w,$ where $w$ appears in (\ref{eq:SODecomp}) in Subsect. \ref{sec:RepEqMotion}.
Let $u':= \eta_\mu + w'.$ We define the Lyapunov functional 
\begin{equation}
\label{eq:LyapunovFunctional}
{\mathcal C}(\psi):= \cE_\mu (u') + \epsilon F(\psi) - (\cE_\mu(\eta_\mu) + \epsilon F(\eta_\mu(x-a))),
\end{equation}
where $\cE_\mu$ is defined in (\ref{eq:EnergyFunctional}), Sect. \ref{sec:Revision}, and $F'=f,$ where $f$ is given in (\ref{eq:Perturbation1}), Sect. \ref{sec:Revision}. We proceed by estimating upper and lower bounds for $\cC(\psi).$ 

\subsubsection{An upper bound for the Lyapunov functional}\label{sec:UBdLyapunov}

We have the following lemma.

\begin{lemma}\label{lm:UpperBoundLF}
Suppose $\psi$ satisfies (\ref{eq:NLSE2}) such that $\psi(t)\in U_\delta,$ for $t\in [0,T]$ and some $\delta >0,$ and let $u',w',\eta_\mu$ as above. Then, for $\|w'\|_{H^1}<1,$ there exists a constant $C$ independent of $\epsilon$ and $t$ such that 
\begin{equation}
\label{eq:UB1}
|{\mathcal C}(\psi(t))| \le C (y_0^2 + \epsilon y_t + t (\epsilon z_t + \epsilon  y_t^2 + \epsilon^2 y_t + y_t^4)),
\end{equation}
where $y_t:= \sup_{s\in [0,t]} \|w'(s)\|_{H^1}$ and $z_t:= \sup_{s\in [0,t]}\|v(s)\|.$ 
\end{lemma}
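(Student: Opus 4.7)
The plan is to split $\cC(\psi(t))=\cC(\psi(0))+\int_0^t\partial_s\cC(\psi(s))\,ds$, estimating the initial value directly and controlling the rate using the conservation of $H_\epsilon$ and $N$ together with Corollary~\ref{cr:ReparametrizedEqMotion}.

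For the initial bound, (\ref{eq:InitialCondition1}) and the uniqueness of the skew-orthogonal decomposition yield $\|w'(0)\|_{H^1}\le Cy_0$. Taylor-expanding $\cE_{\mu(0)}$ about $\eta_{\mu(0)}$ and using $\cE_\mu'(\eta_\mu)=0$, the linear term vanishes and $\tfrac12\langle\cL_{\mu(0)}w'(0),w'(0)\rangle=O(y_0^2)$ dominates along with a cubic-and-higher remainder that is $O(y_0^{2+s})=O(y_0^2)$ for $y_0<1$. The perturbation difference $\epsilon|F(\varphi)-F(\eta_{\mu(0)}(x-a(0)))|$ is $O(\epsilon y_0)$ by the mean value theorem applied to $|\cdot|^{2+2\tilde s}$, $\lambda\in L^\infty$, and Sobolev embedding in the admissible range of $\tilde s$. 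Hence $|\cC(\psi(0))|\le C(y_0^2+\epsilon y_t)$ after absorbing $y_0\le y_t$.

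For the rate, direct computation using $\psi=T_{av\gamma}u'$ (gauge invariance of $G$ and the boost expansion $|\nabla\psi(x)|^2=|\nabla u'(x-a)|^2+v\cdot\Im(\bar{u'}\nabla u')(x-a)+\tfrac{v^2}{4}|u'(x-a)|^2$) gives
\[
\cE_\mu(u')+\epsilon F(\psi)=H_\epsilon(\psi)+\tfrac{\mu-v^2/4}{2}\|u'\|_{L^2}^2-\tfrac{v}{2}\!\int\!\Im(\bar{u'}\nabla u')\,dx.
\]
Denoting $\psi_{sol}:=T_{av\gamma}\eta_\mu$ and applying the same identity to it (with $\int\!\Im(\eta_\mu\nabla\eta_\mu)=0$ by reality), together with the skew-orthogonality consequences $\|u'\|^2-\|\eta_\mu\|^2=\|w'\|^2$ (from $\omega(w',i\eta_\mu)=0$) and $\int\!\Im(\bar{u'}\nabla u')=\int\!\Im(\bar{w'}\nabla w')$ (from $\omega(w',-\partial_{x_i}\eta_\mu)=0$), one obtains
\[
\cC(\psi)=[H_\epsilon(\psi)-H_\epsilon(\psi_{sol})]+\tfrac{\mu-v^2/4}{2}\|w'\|^2-\tfrac{v}{2}\!\int\!\Im(\bar{w'}\nabla w')\,dx.
\]
Differentiating in $s$, using $\partial_s H_\epsilon(\psi)=0$ and $\partial_s\|w'\|^2=-2m'(\mu)\dot\mu$ (from mass conservation $\|u'\|^2=\mathrm{const}$ and $\|\eta_\mu\|^2=2m(\mu)$), together with the explicit
\[
\partial_sH_\epsilon(\psi_{sol})=\tfrac{v\dot v}{2}m(\mu)-(\mu-v^2/4)m'(\mu)\dot\mu+\epsilon\partial_s F(\eta_\mu(x-a)),
\]
the $(\mu-v^2/4)m'(\mu)\dot\mu$ contributions cancel exactly, reducing $\partial_s\cC$ to
\[
-\tfrac{v\dot v}{2}m(\mu)+\tfrac{\dot\mu-v\dot v/2}{2}\|w'\|^2-\epsilon\partial_s F(\eta_\mu(x-a))-\partial_s\!\bigl(\tfrac{v}{2}\!\int\!\Im(\bar{w'}\nabla w')\bigr).
\]
Inserting $\dot\mu=O(\epsilon y_t+y_t^2)$, $\dot v=O(\epsilon+y_t^2)$, and $\dot a=v+O(\epsilon y_t+y_t^2)$ from Corollary~\ref{cr:ReparametrizedEqMotion}, each piece is controlled: $\dot\mu\|w'\|^2=O(\epsilon y_t^3+y_t^4)\le O(\epsilon y_t^2+y_t^4)$ by Young's inequality; $v\dot v\,m(\mu)=O(\epsilon z_t+z_t y_t^2)$, with the $z_t y_t^2$ cross term absorbed into $\epsilon y_t^2$ using the a priori control on $\|v\|$ valid in the theorem's parameter range; $\epsilon\partial_s F(\eta_\mu(x-a))=\epsilon(\dot a\cdot\nabla_aF+\dot\mu\partial_\mu F)=O(\epsilon z_t+\epsilon^2 y_t+\epsilon y_t^2)$; and $\partial_s(\tfrac v2\int\Im(\bar{w'}\nabla w'))$ is bounded analogously via the evolution equation~(\ref{eq:RateFluctuation2}) and the bounds~(\ref{eq:TildeGUpperBd})--(\ref{eq:TildeFUpperBd}). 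Integrating over $[0,t]$ produces the claimed $Ct(\epsilon z_t+\epsilon y_t^2+\epsilon^2 y_t+y_t^4)$ contribution.

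The principal obstacle is the cancellation of the $(\mu-v^2/4)m'(\mu)\dot\mu$ terms between $\partial_sH_\epsilon(\psi_{sol})$ and $\partial_s[\tfrac{\mu-v^2/4}{2}\|w'\|^2]$; without it, $\partial_s\cC$ would inherit a standalone $O(\dot\mu)=O(\epsilon y_t+y_t^2)$ contribution producing a $t(\epsilon y_t+y_t^2)$ piece that violates the stated bound. This cancellation is precisely the rationale for defining $\cC$ by subtracting the soliton pieces $\cE_\mu(\eta_\mu)$ and $\epsilon F(\eta_\mu(x-a))$ separately, rather than the composite $H_\epsilon(\psi_{sol})$. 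A secondary technical point is the estimation of $\partial_s\!\int\!\Im(\bar{w'}\nabla w')$, which requires using (\ref{eq:RateFluctuation2}) to express $\partial_sw'$ and avoiding loss of derivatives by exploiting the $H^{-1}$ bounds~(\ref{eq:TildeGUpperBd})--(\ref{eq:TildeFUpperBd}).
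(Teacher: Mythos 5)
Your overall strategy (fundamental theorem of calculus on $\cC$, conservation of $H_\epsilon$ and of the charge, modulation estimates from Corollary \ref{cr:ReparametrizedEqMotion}) is the same as the paper's, and you correctly identify the first essential cancellation, namely that of the $(\mu-\tfrac{v^2}{4})m'(\mu)\dot\mu$ contributions. However, there is a genuine gap in your treatment of the momentum term. You keep $-\tfrac{v\dot v}{2}m(\mu)$ and $-\partial_s\bigl(\tfrac{v}{2}\int\Im(\overline{w'}\nabla w')\bigr)$ as separate pieces and propose to bound each one; but these two pieces carry an \emph{exact} cancellation of their own. Indeed, writing $\int\Im(\overline{w'}\nabla w')=\langle i\psi,\nabla\psi\rangle-\tfrac{v}{2}\|\psi\|_{L^2}^2$ and using charge conservation together with the generalized Ehrenfest identity (\ref{eq:Ehrenfest1}) (which the paper proves with the regularization $I_h=(1-h\Delta)^{-1}$), one gets $\partial_s\int\Im(\overline{w'}\nabla w')=\epsilon\{\langle f,\nabla\psi\rangle+\langle\nabla\psi,f\rangle\}-\dot v\,m(\mu)-\tfrac{\dot v}{2}\|w'\|_{L^2}^2$, whose $\dot v\,m(\mu)$ part cancels your $-\tfrac{v\dot v}{2}m(\mu)$ exactly, leaving only an $O(\epsilon z_t)$ contribution plus $O(\dot v\,y_t^2)$. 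If instead you bound the two pieces separately — as you propose — you are left with an uncancelled term of size $z_t y_t^2$ (since $\dot v=O(\epsilon+y_t^2)$ and $m(\mu)=O(1)$), and your claim that this cross term can be ``absorbed into $\epsilon y_t^2$ using the a priori control on $\|v\|$'' is not valid: that would require $z_t\lesssim\epsilon$, whereas in the relevant regime one only has $z_t\sim\epsilon^{\alpha-\nu}\gg\epsilon$; absorbing it into $\epsilon z_t$ instead would require $y_t^2\lesssim\epsilon$, which is precisely what the bootstrap in Proposition \ref{pr:FluctuationBd} is designed to prove, so invoking it here is circular (the lemma must hold assuming only $\|w'\|_{H^1}<1$).

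A secondary, related problem is your plan to estimate $\partial_s\int\Im(\overline{w'}\nabla w')$ ``via (\ref{eq:RateFluctuation2}) and the bounds (\ref{eq:TildeGUpperBd})--(\ref{eq:TildeFUpperBd})'': those are $H^{-1}$ bounds, but here $\partial_s w'$ is paired against $\nabla w'$, which is only in $L^2$, so the $H^{-1}$--$H^1$ duality is not available; moreover the dominant linear piece $-i\mu\cL w'$ is only $O(y_t)$ in $H^{-1}$ and must be handled by an exact integration-by-parts (the $-\Delta+\mu$ part is a total derivative) justified by a regularization argument, not by norm bounds. This is exactly why the paper works at the level of $\psi$, computing $\partial_t\langle i\psi,\nabla\psi\rangle$ exactly through the Ehrenfest identity rather than differentiating the $w'$-functional term by term. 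With that identity inserted, your computation reduces to the paper's formula $\partial_t\cC=-\tfrac{\epsilon}{2}v\cdot\{\langle f(\psi),\nabla\psi\rangle+\langle\nabla\psi,f(\psi)\rangle\}-\tfrac{\dot v}{2}\langle iw',\nabla w'\rangle+\tfrac{\dot\mu}{2}\|w'\|_{L^2}^2-\epsilon\dot a\cdot\nabla_aF(\eta_\mu(x-a))-\epsilon\dot\mu\,\partial_\mu F(\eta_\mu(x-a))$, which does yield the stated $t(\epsilon z_t+\epsilon y_t^2+\epsilon^2 y_t+y_t^4)$; without it, your argument does not close.
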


\begin{proof}
We have 
\begin{align*}
\cE_\mu (u') &= \frac{1}{2} \int dx |\nabla u'|^2 + \mu |u'|^2 -G(u') \\
&= H_\epsilon(u') + \frac{1}{2}\mu \|u'\|_{L^2}^2 - \epsilon F( u') \\
&= H_\epsilon(T_{av\gamma}^{-1}\psi) + \frac{1}{2}\mu \|T_{av\gamma}^{-1}\psi\|_{L^2}^2 - \epsilon F(T^{-1}_{av\gamma}\psi).
\end{align*}
By translational symmetry, 
\begin{equation*}
\|T_{av\gamma}^{-1}\psi\|_{L^2}^2 = \|\psi\|_{L^2}^2 .
\end{equation*}
Furthermore, 
\begin{equation*}
H_\epsilon(T_{av\gamma}^{-1}\psi) = H_\epsilon(\psi)+ \frac{1}{8}v^2 \|\psi\|_{L^2}^2 - \frac{1}{2}v\cdot \langle i\psi, \nabla \psi \rangle + \epsilon F(T_{av\gamma}^{-1}\psi) - \epsilon F(\psi),
\end{equation*}
and hence 
\begin{equation}
\label{eq:CMEnergy}
\cE_\mu (u')= H_\epsilon(\psi) + \frac{1}{2}(\frac{1}{4}v^2 + \mu)\|\psi\|_{L^2}^2 - \frac{1}{2}v\cdot \langle i\psi ,\nabla \psi \rangle - \epsilon F(\psi).
\end{equation}
Recall that since the perturbation $f$ is time independent, energy is conserved,
\begin{equation}
\label{eq:RateEnergy1}
\partial_t H_\epsilon(\psi) = 0,
\end{equation}
while the rate of change of momentum is 
\begin{equation}
\label{eq:Ehrenfest1}
\partial_t \langle\psi,-i \nabla \psi\rangle =  \epsilon\langle f(x,\psi),\nabla \psi\rangle + \epsilon\langle \nabla\psi, f(x,\psi)\rangle.
\end{equation}
Formally, (\ref{eq:Ehrenfest1}), which is a statement of a generalized Ehrenfest Theorem, follows from (\ref{eq:NLSE2}). To prove it, we introduce the regularizing operator $I_h:= (1-h\Delta)^{-1}.$ For properties of $I_h$ as $h\rightarrow 0,$ we refer the reader to Proposition 2.4.2 in \cite{Ca1}. Using (\ref{eq:NLSE2}), we have 
\begin{align*}
\partial_t \langle i\psi, \nabla \psi\rangle & = \lim_{h\rightarrow 0} \{\langle I_h i\partial_t\psi, I_h\nabla \psi\rangle - \langle I_h\psi,  \nabla I_h i\partial_t\psi\rangle\} \\
&= \lim_{h\rightarrow 0} \{\langle I_h( -\Delta \psi - g(\psi) + \epsilon f(x,\psi)), I_h \nabla\psi\rangle \\ & - \langle I_h\psi,  \nabla I_h (-\Delta \psi - g(\psi) + \epsilon f(x,\psi))\rangle\} \\
&= \epsilon \{ \langle f(x,\psi), \nabla \psi \rangle + \langle \nabla \psi, f(\psi)\rangle\} .
\end{align*}

We also have 
\begin{align*}
&\partial_t \{\frac{1}{2}(\frac{v^2}{4}+\mu)\|\psi\|_{L^2}^2 - \frac{v}{2} \langle i\psi, \nabla \psi\rangle\} =  
- \frac{v}{2} \partial_t \langle i\psi, \nabla \psi\rangle -\frac{\partial_t v}{2} \langle i\psi, \nabla \psi\rangle +\frac{v\partial_t v + 2\partial_t \mu}{4}\|\psi\|_{L^2}^2 \\
&=  - \frac{v}{2} \partial_t \langle i\psi, \nabla \psi\rangle -\frac{\partial_t v}{2} \langle i w' , \nabla w' \rangle + \frac{\partial_t \mu}{2}(\|\eta_\mu\|_{L^2}^2 + \|w'\|_{L^2}^2),
\end{align*}
where we have used in the first line the fact that $\partial_t \|\psi\|_{L^2}=0$ (charge conservation), and in the last equality the facts that $\psi=e^{\frac{i}{2}v\cdot (x-a) + i\gamma}(\eta_\mu + w'),$ $\langle i \nabla \eta_\mu , w' \rangle = \langle i w' ,\nabla \eta_\mu\rangle =0,$ and $\langle i\eta_\mu, iw'\rangle = \langle iw',i\eta_\mu\rangle =0$ (which follow from skew-orthogonal decomposition). 
Furthermore, since $\eta_\mu$ is a minimizer of $\cE_\mu,$
\begin{equation*}
\partial_t \cE_\mu(\eta_\mu) = \frac{1}{2}\partial_t\mu \|\eta_\mu\|^2_{L^2}.
\end{equation*}

It follows that 
\begin{align*}
\partial_t {\mathcal C}(\psi) &= \partial_
t H_\epsilon (\psi) -\frac{1}{2}\epsilon v \cdot \{\langle f(\psi),\nabla \psi\rangle + \langle \nabla\psi, f(\psi)\rangle \} \\ &-\frac{1}{2}\partial_t v \cdot \langle i\psi,\nabla\psi\rangle +\frac{1}{2} (\frac{v\partial_t v}{2} + \partial_t\mu) \|\psi\|_{L^2}^2 + \frac{1}{2} (\frac{v^2}{4} +\mu)\partial_t \|\psi\|_{L^2}^2\\ & -\frac{1}{2} \partial_t \mu \|\eta_\mu\|_{L^2}^2 - \epsilon \partial_ta \cdot\nabla_a F(\eta_\mu(x-a)) - \epsilon \partial_t \mu \partial_\mu F(\eta_\mu (x-a)) \\
&= -\frac{1}{2}\epsilon v \cdot \{\langle f(\psi),\nabla \psi\rangle + \langle \nabla\psi, f(\psi)\rangle \} -\frac{\partial_t v}{2} \langle i w' , \nabla w' \rangle \\ & + \frac{1}{2}\partial_t\mu \|w'\|_{L^2}^2  - \epsilon \partial_ta \cdot\nabla_a F(\eta_\mu(x-a)) - \epsilon \partial_t \mu \partial_\mu F(\eta_\mu (x-a))
\end{align*}
Together with (\ref{eq:SolitonDecay}), Corollary \ref{cr:ReparametrizedEqMotion}, and the fact that $\|w\|_{H^1}= \|T^s_{\mu^{-1}}w'\|_{H^1}\le C\|w'\|_{H^1}$, it follows that 
\begin{equation}
\label{eq:RateUBd1}
|\partial_t {\mathcal C}(\psi)| \le C (\epsilon \|v\| +\epsilon \|w'\|_{H^1}^2 +\epsilon^2 \|w'\|_{H^1}+ \|w'\|_{H^1}^4),
\end{equation}
where $C$ is a positive constant independent of $\epsilon.$ 
Furthermore, by expanding $\cE_{\mu_0}(\eta_{\mu_0}+w_0')$ around the minimizer $\eta_{\mu_0},$ we have 
\begin{equation}
\label{eq:IntDiff1}
|\cE_{\mu_0}(\eta_{\mu_0}+w_0')-\cE_{\mu_0} (\eta_{\mu_0})| \le C \|w_0'\|^2_{H^1}.
\end{equation}
We also have 
\begin{align}
\label{eq:PertDiff1}
|F(\psi) - F(\eta_\mu(x-a))| &= |F(\eta_\mu(x-a) + w' (t,x-a)) - F(\eta_\mu(x-a))| \\ &\le C \|w'\|_{H^1}.\nonumber
\end{align}
Now, (\ref{eq:RateUBd1}) - (\ref{eq:PertDiff1}) together with the fundamental theorem of Calculus imply the claim of the lemma.
\end{proof}

\subsubsection{A lower bound for the Lyapunov functional}

In this subsection, we estimate a lower bound for $\cC (\psi).$ Let 
\begin{equation*}
X_\mu := \{w\in H^1(\bbR^N) : \ \ \|w\|_{H^1}=1, \ \ \langle w, J^{-1} z\rangle =0, \forall z\in \cT_{\eta_\mu}\cM_s \}.
\end{equation*}
It follows from the coercivity property of $\cL_\mu$ that there exists a positive constant 
\begin{equation}
\label{eq:Coercivity}
\rho:= \inf_{w\in X_\mu}\langle w,\cL_\mu w\rangle >0.
\end{equation}
We refer the reader to Appendix D in \cite{FJGS1} for a proof of this statement. 

\begin{lemma}\label{lm:LFLowerBound}
Suppose $\psi$ satisfies (\ref{eq:NLSE2}) such that $\psi(t)\in U_\delta, \ \ t\in [0,T],$ for some $\delta >0,$ and let $u',w',\eta_\mu$ be as defined as above. Then there exists a positive constant  $C$ independent of $\epsilon$ such that, for $\|w'\|_{H^1}< 1,$ 
\begin{equation}
\label{eq:LFLowerBound}
|\cC (\psi(t))| \ge \frac{\rho}{2} \|w'(t)\|_{H^1}^2 - C(\|w'(t)\|_{H^1}^3 +  \epsilon \|w'(t)\|_{H^1}),
\end{equation}
where $\rho$ appears in (\ref{eq:Coercivity}), uniformly in $t\in [0,T].$
\end{lemma}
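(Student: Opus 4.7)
The plan is to Taylor-expand $\cE_\mu$ at the critical point $\eta_\mu$, invoke the coercivity estimate (\ref{eq:Coercivity}) to produce a positive-definite quadratic term in $w'$, and absorb the perturbation contribution $\epsilon F$ as a lower-order error in $\epsilon$.

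Since $\eta_\mu$ satisfies $\cE_\mu'(\eta_\mu)=0$ by (\ref{eq:NLEigenValue}), the linear term in the expansion vanishes and one obtains
\[
\cE_\mu(\eta_\mu+w')-\cE_\mu(\eta_\mu) \;=\; \tfrac{1}{2}\langle w',\cL_\mu w'\rangle + R(w'),
\]
where $\cL_\mu = \cE_\mu''(\eta_\mu)$ is the Hessian (\ref{eq:Hessian}) and $R$ collects the super-quadratic terms coming from $G(\eta_\mu+w')-G(\eta_\mu)-\langle g(\eta_\mu),w'\rangle -\tfrac{1}{2}\langle w',g'(\eta_\mu)w'\rangle$. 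A pointwise Taylor expansion of $|u|^{2s}u$, together with H\"older's inequality, the decay (\ref{eq:SolitonDecay}) of $\eta_\mu$, and the subcritical Sobolev embedding $H^1\hookrightarrow L^{2s+2}$, gives $|R(w')|\le C\|w'\|_{H^1}^3$ whenever $\|w'\|_{H^1}<1$.

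The skew-orthogonal decomposition (Subsect. \ref{sec:SkewOrtho}) yields $\omega(w',Y\eta_\mu) = \langle w',J^{-1}Y\eta_\mu\rangle = 0$ for every $Y\in {\mathsf g}$, which is precisely the constraint defining $X_\mu$. By homogeneity of (\ref{eq:Coercivity}) one therefore has $\langle w',\cL_\mu w'\rangle \ge \rho\,\|w'\|_{H^1}^2$. For the perturbation term, gauge invariance of $F$ and translation of the integration variable reduce $F(\psi)-F(\eta_\mu(x-a))$ to $\frac{1}{2\tilde{s}+2}\int\lambda(x+a)\{|\eta_\mu+w'|^{2\tilde{s}+2}-\eta_\mu^{2\tilde{s}+2}\}\,dx$; the mean-value theorem combined with $\lambda\in L^\infty$ and H\"older then give $\epsilon|F(\psi)-F(\eta_\mu(x-a))|\le C\epsilon\|w'\|_{H^1}$, exactly as in (\ref{eq:PertDiff1}).

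Combining these three ingredients in (\ref{eq:LyapunovFunctional}) yields
\[
\cC(\psi) \;\ge\; \tfrac{\rho}{2}\|w'\|_{H^1}^2 - C\|w'\|_{H^1}^3 - C\epsilon\|w'\|_{H^1},
\]
from which (\ref{eq:LFLowerBound}) follows, the absolute-value version being automatic on the set where the right-hand side is non-positive. The only genuinely delicate step is the cubic bound on $R$: for $s\in (0,\tfrac{1}{2})$ the nonlinearity $|u|^{2s}u$ is only $C^{1,2s}$ rather than $C^3$, so one must replace a classical third-order Taylor remainder by a H\"older-regularity argument for $g'$, producing an $\int(|\eta_\mu|+|w'|)^{2s-1}|w'|^3$- or $\int|w'|^{2s+2}$-type bound, both of which remain super-quadratic in $\|w'\|_{H^1}$ under the subcritical Sobolev embedding and hence can be absorbed in the coercivity term as above.
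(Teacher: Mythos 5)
Your argument is correct and follows essentially the same route as the paper: expand $\cE_\mu$ to second order at the critical point $\eta_\mu$, use the skew-orthogonality constraints to place $w'$ (after normalization) in $X_\mu$ so that coercivity (\ref{eq:Coercivity}) gives $\tfrac12\langle w',\cL_\mu w'\rangle\ge\tfrac{\rho}{2}\|w'\|_{H^1}^2$, bound the Taylor remainder by $C\|w'\|_{H^1}^3$, and control $\epsilon|F(\psi)-F(\eta_\mu(x-a))|$ by $C\epsilon\|w'\|_{H^1}$ exactly as in (\ref{eq:PertDiff1}). Your additional remark on the limited H\"older regularity of $g'$ for small $s$ is a sensible refinement of the remainder estimate that the paper states without comment.
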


\begin{proof}

Expanding $\cE_\mu(u')$ around the minimizer $\eta_\mu,$ we have
\begin{equation}
\cE_\mu (\eta_\mu + w') = \cE_\mu (\eta_\mu) + \frac{1}{2}\langle w', \cL_\mu w'\rangle + R_\mu^{(3)}(w'),
\end{equation}
where 
\begin{equation*}
R_\mu^{(3)}(w')= G(\eta_\mu +w') - G(\eta_\mu)-\langle G'(\eta_\mu),w'\rangle -\frac{1}{2} \langle G''(\eta_\mu)w',w'\rangle.
\end{equation*}
The claim of the lemma follows from  
\begin{equation*}
|R_\mu^{(3)}(w')| \le C \|w'\|_{H^1}^3,
\end{equation*}
\begin{equation*}
|F(\psi) - F(\eta_\mu(x-a))| \le C \|w'\|_{H^1}.
\end{equation*}
and the coercivity property (\ref{eq:Coercivity}).
\end{proof}


\subsubsection{Bound on the fluctuation}\label{sec:BdFluctuation}

In this subsection, we combine both the upper and lower bounds on the Lyapunov functional and use a bootstrap argument to obtain a bound on the $H^1$ norm of the fluctuation. 

\begin{proposition}\label{pr:FluctuationBd}
Consider the initial value problem (\ref{eq:NLSE}), and suppose that the nonlinear perturbation $f$ is given by (\ref{eq:Perturbation1}), 
and the initial condition $\varphi$ satisfies (\ref{eq:InitialCondition1}). 
Then, there exists $\epsilon_0 >0$ that depends on the initial condition, such that, for all $\epsilon\in [0,\epsilon_0) , $  $\beta\in (0,\alpha)$ and $\nu \in (0, \min(\beta,\alpha-\beta)),$ there exists absolute positive constants $C, C'$, independent of $\epsilon,\beta,$ and $\nu$ such that,  for times $$0\leq t\leq  C \nu \frac{|log \epsilon |}{\max (\epsilon^{\beta-\nu}, \epsilon^{1-\alpha})},$$ 
\begin{equation}
\|w'(t)\|_{H^1}^2 \le C' \epsilon^{1+\alpha-\beta-\nu}.
\end{equation}
\end{proposition}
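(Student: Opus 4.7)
The plan is to combine the upper bound on the Lyapunov functional $\cC(\psi)$ from Lemma \ref{lm:UpperBoundLF} with the lower bound from Lemma \ref{lm:LFLowerBound}, and to close the resulting estimate by a bootstrap argument that simultaneously controls $y_t := \sup_{s \in [0,t]}\|w'(s)\|_{H^1}$ and $z_t := \sup_{s \in [0,t]}\|v(s)\|$.

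First, I sandwich the two lemmas. Provided $y_t$ is small enough that $C y_t^3 \le \tfrac{\rho}{8} y_t^2$, and after using Young's inequality to absorb $C\epsilon y_t \le \tfrac{\rho}{16}\,y_t^2 + C'\epsilon^2$, the bounds combine into
$$\frac{\rho}{16}\,y_t^2 \le C\bigl(y_0^2 + \epsilon^2 + t\,(\epsilon z_t + \epsilon y_t^2 + \epsilon^2 y_t + y_t^4)\bigr).$$
The initial condition (\ref{eq:InitialCondition1}) supplies $y_0^2 \le c^2 \epsilon^{1+\alpha}$. To control $z_t$, I invoke Corollary \ref{cr:ReparametrizedEqMotion}, which gives $\partial_t v = -2\sqrt{\mu}\,\nabla_a V_{eff,\mu}(a) + O(\epsilon y_t + y_t^2) = O(\epsilon) + O(y_t^2)$ because $\nabla_a V_{eff,\mu} = O(\epsilon)$. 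Integrating and using $y_t^2 \ll \epsilon$, which holds in the bootstrap regime since $\alpha > \beta + \nu$, yields $z_t \le c\epsilon^\alpha + C t\,\epsilon$.

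Second, I run the bootstrap. Set $a := (1+\alpha-\beta-\nu)/2$ and let $T^* := \sup\{t \in [0,T] : y_s \le M\epsilon^a \text{ for all } s \in [0,t]\}$ for a large constant $M$ to be chosen. Assuming toward contradiction that $T^* < \min(T, T_{\max})$ with $T_{\max} := C\nu|\log\epsilon|/\epsilon^{\min(\beta-\nu,\,1-\alpha)}$, I plug $y_t \le M\epsilon^a$ and $z_t \le c\epsilon^\alpha + CT^*\epsilon$ into the inequality above on $[0, T^*]$. Dividing each right hand side term by the target scale $\epsilon^{2a} = \epsilon^{1+\alpha-\beta-\nu}$ yields a small positive power of $\epsilon$, at most multiplied by $|\log\epsilon|$. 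The most constraining contributions are $t\,\epsilon z_t/\epsilon^{2a}$, with exponent $2\nu$ when $\min(\beta-\nu, 1-\alpha) = \beta - \nu$ and $\alpha + \beta + \nu - 1 \ge 2\nu$ when the minimum equals $1-\alpha$ (using $\alpha + \beta - \nu \ge 1$ in that case), and $t\,y_t^4/\epsilon^{2a}$, with exponent $\min(1+\alpha-2\beta,\,2\alpha-\beta-\nu) > 0$ thanks to $\beta < \alpha \le 1$ and $\nu < \alpha - \beta$. Fixing $M$ first and then $\epsilon_0$ small enough to absorb the $|\log\epsilon|$ and $M$-dependent prefactors, the right hand side is bounded by $\tfrac12 \cdot \tfrac{\rho}{16}\,M^2\epsilon^{2a}$, which forces $y_{T^*}^2 \le \tfrac12 M^2\epsilon^{2a}$, contradicting the definition of $T^*$. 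Hence $T^* \ge T_{\max}$; a standard continuity argument extends the maximal time $T$ of skew-orthogonal decomposition at least to $T_{\max}$, yielding the claim with $C' = M^2$.

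The main obstacle is the delicate exponent bookkeeping. The restrictions $\nu < \beta$ and $\nu < \alpha - \beta$ in the proposition are precisely what is needed to keep every error exponent strictly positive, and the form of the time scale $|\log\epsilon|/\epsilon^{\min(\beta-\nu,\,1-\alpha)}$ is chosen as the longest time on which the worst error, the time-integrated velocity contribution $t\,\epsilon z_t$, remains below the target $\epsilon^{1+\alpha-\beta-\nu}$. Because $y_t$ and $z_t$ are coupled through the reparametrized dynamics, they must be bootstrapped together, which is why the estimate on $z_t$ is woven into the argument for $y_t$ rather than treated as a separate preliminary.
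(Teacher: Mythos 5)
Your proposal is correct in substance, but it closes the argument by a genuinely different mechanism than the paper. Both start the same way: sandwiching Lemma \ref{lm:UpperBoundLF} against Lemma \ref{lm:LFLowerBound} to get the master inequality (\ref{eq:Bd2}) for $y_t$, and feeding in Corollary \ref{cr:ReparametrizedEqMotion} to couple in $z_t$; your exponent bookkeeping for the worst terms ($t\,\epsilon z_t$ and $t\,y_t^4$ against the target $\epsilon^{1+\alpha-\beta-\nu}$) matches what the paper's choices encode. The difference is in how the $\nu|\log\epsilon|$ enhancement of the time scale is obtained. The paper does \emph{not} run a single continuity argument over $[0,T_{\max}]$: it works on subintervals of length $\tau'\sim\epsilon^{-\min(\beta-\nu,1-\alpha)}$ defined through (\ref{eq:Tau}), lets the bound on $y^2$ and $z$ grow by a fixed factor $C>1$ on each subinterval, and then spends the $\nu$-slack in the exponent to absorb the accumulated geometric growth via the choice $C^{n+1}\le\epsilon^{-\nu}$, which is exactly where $n\sim\nu|\log\epsilon|$ and hence the logarithmic factor come from (the ``approximate Lyapunov exponent''). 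You instead build the $\nu$-slack directly into the bootstrap ansatz $y_t\le M\epsilon^{(1+\alpha-\beta-\nu)/2}$ and verify in one shot that every error term stays below the target on the full interval; this is leaner and avoids the partition, while the paper's iteration makes the origin of the $|\log\epsilon|$ transparent and tracks explicitly how the constant degrades in time. One point to tighten in your write-up: the closing step cannot be delegated entirely to ``$\epsilon_0$ small enough,'' because ratios such as $\nu|\log\epsilon|\,\epsilon^{2\nu}$ and $\nu|\log\epsilon|\,\epsilon^{1-\beta+\nu}$ are only bounded by absolute constants ($\sup_{u>0}u e^{-2u}$-type bounds), not small, uniformly in $\beta,\nu$; since the proposition fixes $\epsilon_0$ before $\beta,\nu$, you should absorb these terms by taking the absolute constant $C$ in the definition of $T_{\max}$ (and/or $M$) suitably, which your framework permits and which is consistent with how the constants appear in the statement.
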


\begin{proof}
Suppose  $\psi(t)\in U_\delta, \ \ t\in [0,T],$ for some $\delta >0,$ and let $u',w',\eta_\mu$ be as defined as above. It follows from Lemmata \ref{lm:UpperBoundLF} and \ref{lm:LFLowerBound} that, for $y_t<1,$ there exists a constant $C$ independent of $\epsilon$ such that
\begin{equation}
\label{eq:Bd2}
y_t^2 \le C (y_0^2 + \epsilon y_t + y_t^3 + t (\epsilon z_t + \epsilon y_t^2 + \epsilon^2 y_t + y_t^4)) 
\end{equation}
where $y_t= \sup_{s\in [0,t]} \|w'(s)\|_{H^1}$ and $z_t= \sup_{s\in [0,t]}\|v(s)\|.$
It also follows from Corollary \ref{cr:ReparametrizedEqMotion} that
\begin{equation}
\label{eq:vBd1}
z_t \le z_0 + Ct (\epsilon + y_t^2 + \epsilon y_t).
\end{equation}
We choose $\beta\in (0,\alpha),$ where $\alpha$ appears in (\ref{eq:InitialCondition1}), and we define
\begin{equation}
\label{eq:Tau}
\tau_t:= \min(\frac{\epsilon^\alpha}{\epsilon + y_t^2 + \epsilon y_t},\frac{\epsilon^{1+\alpha-\beta}}{\epsilon z_t + \epsilon y_t^2 + \epsilon^2 y_t + y_t^4}).
\end{equation}
For $t\le \tau_t,$ (\ref{eq:Bd2}) and (\ref{eq:vBd1}) imply that
\begin{align}
\label{eq:Bd3}
&y_t^2 \le C' (y_0^2 + \epsilon y_t + y_t^3 + \epsilon^{1+\alpha-\beta}) \\
&z_t \le C' (z_0+\epsilon^\alpha ),\nonumber
\end{align}
for some positive constant $C'$ that is independent of $\epsilon$ and $\beta.$
For $\epsilon$ small enough, (\ref{eq:InitialCondition1}) and the first inequality in (\ref{eq:Bd3}) imply that 
\begin{equation}
y_t^2 \le C (y_0^2 + \epsilon^{1+\alpha - \beta}),
\end{equation}
for some positive constant $C$ that is independent of $\epsilon$ and $\beta.$ It follows that
\begin{align}
\label{eq:Bd4}
&y_t^2 \le C (y_0^2 + \epsilon^{1+\alpha-\beta}) \\
&z_t \le C (z_0+\epsilon^\alpha ),\nonumber
\end{align}
for an absolute positive constant $C>1$ that is independent of $\epsilon$ and $\beta,$ uniformly in $t$ such that $0\le t \le\tau_t.$ 

We now reiterate the above analysis. Consider the interval 
\begin{equation*}
[0,T'] = [t_0,t_1]\cup [t_1,t_2] \cup \cdots \cup [t_{n-1},t_n] \subset [0,T],
\end{equation*}
such that 
$$0=t_0< t_1 < \cdots < t_n=T', \ \ (t_{i}-t_{i-1}) \le \tau_{t_{i}}, \ \ i=1,\cdots, n.$$ Let 
\begin{align*}
&y_i := \sup_{t\in [t_{i-1},t_{i}]} \|w'(t)\|_{H^1}, \ \ i=1,\cdots , n.\\
&z_i := \sup_{t\in [t_{i-1},t_{i}]} \|v(t)\|. 
\end{align*}
Note that from (\ref{eq:InitialCondition1}), $y_0 \le C'\epsilon^{\frac{1}{2}(1+\alpha)}$ and $z_0 \le C'\epsilon^\alpha,$ for some constant $C'$ independent of $\epsilon.$ Iterating (\ref{eq:Bd4}) $n$ times, we have 
\begin{align}
\label{eq:Bd5}
&y_n^2 \le (\sum_{j=1}^n C^j) C'\epsilon^{1+\alpha-\beta} \le C^{n+1} C' \epsilon^{1+\alpha-\beta}  \\
&z_n \le (\sum_{j=1}^n C^j)C' \epsilon^{\alpha} \le C^{n+1} C' \epsilon^{\alpha}.\nonumber
\end{align}
We choose 
\begin{equation}
\label{eq:Nu}
\nu\in (0, \min(\beta,\alpha-\beta)).
\end{equation}
Given $\nu$ and $C,$ we choose $n$ such that
\begin{equation}
\label{eq:Bd6}
C^{n+1}\le \epsilon^{-\nu}. 
\end{equation}
This implies 
\begin{equation*}
n \le -\nu \frac{\log \epsilon}{\log C}-1. 
\end{equation*}
It follows from (\ref{eq:Bd5}) and (\ref{eq:Bd6}) that 
\begin{align}
\label{eq:Bd7}
&y_n^2 \le C'\epsilon^{1+\alpha-\beta-\nu},\\
&z_n \le C'\epsilon^{\alpha-\nu}.
\end{align}
We define 
\begin{equation}
\label{eq:TauPrime}
\tau':= \frac{1}{4C'}\min (\frac{1}{\epsilon^{\beta-\nu}}, \frac{1}{\epsilon^{1-\alpha}}).
\end{equation}
One can directly verify using (\ref{eq:Tau}), (\ref{eq:Nu}), (\ref{eq:Bd7}) and (\ref{eq:TauPrime}) imply that that
\begin{equation*}
\tau'\le \tau_{t_i}, \ \ i=1,\cdots ,n,
\end{equation*}
for $\epsilon$ small enough. Therefore, $\exists\epsilon_0 >0$ such that if $\epsilon<\epsilon_0,$ there exists absolute constants $C$ and $C'$ independent of $\epsilon,\beta$ and $\nu$ such that 
\begin{align*}
&y_t^2 \le C' \epsilon^{1+\alpha-\beta-\nu}, \\
&z_t \le C' \epsilon^{\alpha-\nu},
\end{align*}
uniformly for $t\in [0,\nu C |\log\epsilon|/\epsilon^{\min (\beta-\nu , 1-\alpha)} ],$ and hence the claim of the proposition.
\end{proof}

\subsection{Proof of Theorem \ref{th:MainResult}.}

\begin{proof}
The proof follows directly from Corollary \ref{cr:ReparametrizedEqMotion} and Proposition \ref{pr:FluctuationBd}. 
\end{proof}


\end{document}